\renewcommand{\Tilde}{\widetilde}
\renewcommand{\Hat}{\widehat}
\newcommand{\RR}{\mathbb{R}}
\newcommand{\ZZ}{\mathbb{Z}}
\newcommand{\CC}{\mathbb{C}}
\newcommand{\NN}{\mathbb{N}}
\newcommand{\cO}{\mathcal{O}}
\newtheorem{theorem}{Theorem}
\newtheorem*{theorem*}{Theorem}
\newtheorem{prop}[theorem]{Proposition}
\newtheorem{lemma}[theorem]{Lemma}
\newtheorem{corol}[theorem]{Corollary}
\theoremstyle{remark}
\DeclareMathOperator{\supp}{supp}
\begin{document}

\title{\bf Strong coupling asymptotics for a singular Schr\"odinger operator with an interaction supported by an open arc}

\author{{\sc Pavel Exner} \dag\quad and \quad {\sc Konstantin Pankrashkin} \ddag\\[\bigskipamount]
\dag{} Department of Theoretical Physics, Nuclear Physics Institute\\ Czech Academy of
Sciences, 25068 \v Re\v z near Prague, Czechia\\[\smallskipamount]
Doppler Institute for Mathematical Physics\\ and Applied Mathematics\\
Czech Technical University, B\v rehov\'a 7, 11519 Prague, Czechia\\
E-mail: \url{exner@ujf.cas.cz}\\[\medskipamount]
\ddag{} Laboratoire de math\'ematiques d'Orsay, UMR 8628\\
Universit\'e Paris-Sud 11, B\^atiment 425, 91400 Orsay, France\\
E-mail: \url{konstantin.pankrashkin@math.u-psud.fr}
}

\date{}
\maketitle

\begin{abstract}
\noindent We consider a singular Schr\"odinger operator in $L^2(\RR^2)$ written formally as 
$-\Delta - \beta\delta(x-\gamma)$ where $\gamma$ is a $C^4$ smooth open arc in $\RR^2$ of length $L$ with regular ends. It is shown that the $j$th negative eigenvalue of this operator behaves in the strong-coupling limit, $\beta\to +\infty$, asymptotically as 
\[
E_j(\beta)=-\frac{\beta^2}{4} +\mu_j^D +\cO\Big(\dfrac{\log\beta}{\beta}\Big),
\]
where $\mu_j^D$ is the $j$th Dirichlet eigenvalue of the operator
\[
-\frac{d^2}{ds^2} -\frac{\kappa(s)^2}{4}\,
\]
on $L^2(0,L)$ with $\kappa(s)$ being the signed curvature of $\gamma$ at the point $s\in(0,L)$
and $s$ is the arc length parameter.

\bigskip

\noindent{\it
This is a preliminary version. The final version appeared in 
Communications in Partial Differential Equations, Volume 39, Issue 2, 2014, pages 193--212
(published by Taylor \& Francis)
}
\end{abstract}

\section{Introduction}

Singular Schr\"odinger operators with interactions supported by manifolds of a lower dimension have been a subject of investigation in numerous papers, particularly in the last decade. One motivation came from physics where operators formally written as 
\[
H_\beta:=-\Delta - \beta\delta(x-\gamma)
\]
with $\beta>0$, where $\gamma$ is a metric graph embedded in a Euclidean space, are used as models of `leaky quantum graphs' describing motion of particles confined to a graph in a way allowing quantum tunneling between different parts of~$\gamma$.
At the same time there is a mathematical motivation to study such operators because they exhibit nontrivial and interesting relations between spectral properties and the geometry of the interaction support.
In the informal language, the above operator is the Laplacian with
the boundary conditions on $\gamma$, $[\partial f]=\beta f$, where
$[\partial f]$ denotes the jump of the normal derivative of $f$ on $\gamma$;
the rigorous definition is given by the associated sesquilinear form \cite{BEKS}:
$H_\beta$ is the self-adjoint operator associated with the form
\[
h_\beta(f,f)=\iint_{\RR^2}|\nabla f|^2dx - \beta\int_\gamma |f(x)|^2\,dS
\]
defined on $H^1(\RR^2)$, and the above boundary conditions should be understood
in a certain weak sense. Note that the operator in question
can be viewed as a special type mixed problem, cf. e.g. \cite{agr,grubb}.

An overview of known results concerning leaky quantum graphs is given in \cite{E08} which also offers a number of open problems. Some of them concern the \emph{strong-coupling behavior} of such operators. For large $\beta$ one expects the eigenfunctions corresponding to eigenvalues at the bottom of the spectrum to be strongly concentrated around $\gamma$ which suggests the asymptotic spectral behaviour might be determined by a one-dimensional problem.

If $\gamma$ has a finite length $L>0$, it is easy to see that the essential spectrum of $H_\beta$ is $[0,+\infty)$ and
that it has finitely
many negative eigenvalues. Denote these eigenvalues by $E_1(\beta)\le E_2(\beta)\le \dots \le E_j(\beta)\le \dots$
taking into account their multiplicities. The following result is shown in \cite{EY}:
\begin{prop}\label{prop0}
Assume that $\gamma$ is a $C^4$ smooth loop, then for each fixed $j\in\NN$ for $\beta\to+\infty$
one has the asymptotics
\[
E_j(\beta)=-\dfrac{\beta^2}{4}+\mu_j +\cO \Big(\dfrac{\log\beta}{\beta}\Big),
\]
where $\mu_j$ is the $j$th eigenvalue of the one-dimensional operator
\begin{equation}
     \label{eq-ddd}
D:=-\dfrac{d^2}{ds^2}- \dfrac{\kappa(s)^2}{4}
\end{equation}
on $(0,L)$ with the periodic boundary conditions; here $s$ in the length parameter and
$\kappa(s)$ is the signed curvature along $\gamma$.
\end{prop}
A similar result is available
for the case 
when $\gamma$ is an infinite smooth curve without
endpoints which is asymptotically straight in a suitable
sense \cite{E08}. We remark that the smoothness requirement is
essential. For example, if $\gamma$ consists of two
halflines meeting at a nonzero angle, a simple scaling
argument in combination with the existence result of \cite{EI01}
implies that the lowest eigenvalue is $E_1(\beta) = -c\beta^2$
with some $c>1/4$. We also refer to the paper \cite{lp} for
the discussion of similar effects for Robin Laplacians in
regions with non-smooth boundaries.

One asks naturally how such an asymptotics could look like if the curve $\gamma$ has endpoints.
The technique employed in \cite{EY} (which is briefly presented in the next section)
only gives a bracketing:
\begin{equation}
         \label{eq-brak}
-\dfrac{\beta^2}{4} + \mu_j^N+\cO \Big(\dfrac{\log\beta}{\beta}\Big)
\le E_j(\beta)\le
 -\dfrac{\beta^2}{4} + \mu_j^D+\cO \Big(\dfrac{\log\beta}{\beta}\Big),
\quad \beta\to+\infty,
\end{equation}
where $\mu_j^N$ and $\mu_j^D$ are respectively the $j$th Neumann and Dirichlet eigenvalues
of the above differential expression $D$ on $(0,L)$; note that this estimate was sufficient in \cite{EY}
to obtain an asymptotics of the eigenvalue counting function. An attempt was made in~\cite{KV}
to study the difference $E_2(\beta)-E_1(\beta)$, but the estimate obtained for the strong coupling regime
was quite weak.
A conjecture was made in Sec.~7.12 of \cite{E08} that under proper regularity assumptions the analog
of Proposition~\ref{prop0} holds if one replaces $\mu_j$ by $\mu_j^D$.
The aim of the present paper is to prove this conjecture in the case when $\gamma$ is a $C^4$ smooth compact arc in $\RR^2$
with regular endpoints. A precise formulation of this result is stated in the next section and the rest of the paper
is devoted to the proof.
As in the case of a curve without endpoints we employ a bracketing argument imposing Dirichlet and Neumann condition at the boundary of a tubular neighborhood of $\gamma$. In the present case, however, we need a neighbourhood extending beyond the endpoints and we loose the separation
of variables employed in \cite{EY}. Instead we have to establish the decay of eigenfunctions away of $\gamma$ which is technically the main part of the proof.

\section{Main result and first steps of the proof}

Let $\gamma$ be an open $C^4$ arc in $\RR^2$ of length $L>0$ and with regular ends.
More precisely, let $\Gamma:[0,L]\ni  s\mapsto  \big(\Gamma_1(s),\Gamma_2(s)\big)\in\RR^2$ 
be an injective map satisfying at any point $\big|\Gamma'(s)\big|=1$.
We set $\gamma:=\Gamma\big((0,L)\big)$.
Furthermore, we pick an arbitrary $l_0>0$ and fix an  extension
of $\Gamma$ to an injective $C^4$ map from $[-l_0,L+l_0]$ to $\RR^2$
satisfying $|\Gamma'(s)|=1$ for all $s\in (-l_0,L+l_0)$.
Denote by $\kappa(s)$ the signed curvature at $\Gamma(s)$, i.e.
\[
\kappa(s):=\Gamma'_1(s)\Gamma''_2(s)-\Gamma''_1(s)\Gamma'_2(s).
\]
Our main result reads as follows:
\begin{theorem}
         \label{thm1}
For any fixed $j\in\NN$, the asymptotic expansion
\[
E_j(\beta)=-\dfrac{\beta^2}{4}+\mu_j^D +\cO\Big(\dfrac{\log\beta}{\beta}\Big),
\]
holds for strong coupling, $\beta\to+\infty$, where $\mu_j^D$ is the $j$\emph{th} Dirichlet eigenvalue of the 
one-dimensional Schr\"odinger operator
\[
-\dfrac{d^2}{ds^2}-\dfrac{\kappa(s)^2}{4}
\]
with the curvature-induced potential on $(0,L)$.
\end{theorem}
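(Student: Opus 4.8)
\emph{Sketch of a proof (proposal).} The plan is to combine a Dirichlet--Neumann bracketing with respect to a thin tubular neighbourhood of $\gamma$ that extends slightly past the endpoints with a decay analysis of the low-lying eigenfunctions, the latter replacing the asymptotic separation of variables available for a closed curve. First I would introduce Fermi coordinates $(s,u)$ along the extended arc ($s$ the arclength, $u$ the signed normal distance) and set $\Omega_d:=\{\Gamma(s)+u\,\Gamma'(s)^{\perp}\colon -a<s<L+a,\ |u|<d\}$, where $a>0$ is fixed and small enough that these coordinates are a diffeomorphism on $\overline{\Omega_d}$ (here the $C^4$-regularity and the regularity of the ends enter) and where the half-width $d=d(\beta)$ is chosen as a large fixed multiple of $\beta^{-1}\log\beta$. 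Removing the Jacobian $g(s,u):=1-u\kappa(s)$ by the unitary map $f\mapsto g^{1/2}f$ turns $H_\beta$ cut down to $\Omega_d$ into a Schr\"odinger-type operator $B_\beta$ on $L^2\big((-a,L+a)\times(-d,d)\big)$ with form
\[
B_\beta[\psi]=\int g^{-2}|\partial_s\psi|^2+\int|\partial_u\psi|^2+\int W|\psi|^2-\beta\int_0^L|\psi(s,0)|^2\,ds ,
\]
where $W(s,u)=-\tfrac14\kappa(s)^2+\cO(|u|)$ uniformly (this is where the bounds on $\kappa,\kappa',\kappa''$ are used) and the $\delta$-coupling is undistorted because $g\equiv1$ on $\gamma$; a Dirichlet condition on $\partial\Omega_d$ stays Dirichlet, a Neumann one becomes Robin with an $\cO(1)$ coefficient. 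Since $\gamma\subset\Omega_d$ has finite length and the Laplacian on $\RR^2\setminus\overline{\Omega_d}$ is nonnegative, bracketing gives $\lambda_j(B_\beta^{\mathrm N})\le E_j(\beta)\le\lambda_j(B_\beta^{\mathrm D})$, where $\lambda_j$ is the $j$-th eigenvalue below the essential spectrum.

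For the \emph{upper bound} I would use the transverse profile $\chi_\beta$, the normalised ground state of $-\partial_u^2-\beta\delta(u)$ on $(-d,d)$ with a Dirichlet condition: it is even, concentrated at scale $\beta^{-1}$, with eigenvalue $\tau_\beta=-\tfrac14\beta^2+\cO(\beta^2e^{-\beta d})$ and a transverse spectral gap of order $\beta^2$ above it. Testing $B_\beta^{\mathrm D}$ on the span of $\phi_k(s)\chi_\beta(u)$, $k\le j$, with $\phi_1,\dots,\phi_j$ the first Dirichlet eigenfunctions of $-\frac{d^2}{ds^2}-\tfrac14\kappa^2$ on $(0,L)$ extended by zero, and using $\int u\,\chi_\beta^2=0$ and the concentration of $\chi_\beta$, one gets $B_\beta^{\mathrm D}[\phi_k\chi_\beta,\phi_l\chi_\beta]=(\mu_k+\tau_\beta)\delta_{kl}+\cO(\beta^{-2})$, whence by min-max $E_j(\beta)\le-\tfrac14\beta^2+\mu_j+\cO(\beta^{-2})$.

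The \emph{lower bound} is the hard part and is where eigenfunction decay is indispensable. By the upper bound the first $j$ eigenfunctions of $B_\beta^{\mathrm N}$ have energy at most $-\tfrac14\beta^2+C$, and the key technical estimate is of Agmon type: any such $\psi$ satisfies $\int e^{\alpha\beta\varrho}\big(|\psi|^2+|\nabla\psi|^2\big)\le C\|\psi\|^2$ for some fixed $\alpha\in(0,1)$, with $\varrho(s,u)$ comparable to $|u|+\dist\big(s,(0,L)\big)$. This is obtained in the standard way, pairing $B_\beta^{\mathrm N}\psi=E\psi$ with $e^{2\alpha\beta\varrho}\overline\psi$ and integrating by parts, using that off an $\cO(\beta^{-1})\times\cO(1)$ core the effective potential (curvature term plus transverse ground-state energy) exceeds $E$ by a multiple of $\beta^2$; near $\Gamma(0)$ and $\Gamma(L)$, where the transverse well switches off, one still has a gap of order $\beta^2$ because the exterior transverse bottom is only $\cO(1)$. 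Writing $\psi=\phi(s)\chi_\beta(u)+\psi^{\perp}(s,u)$ on $(0,L)$, the decay then yields that the mass of $\psi$ in $\{s\notin(0,L)\}$ is superpolynomially small, that $\|\psi^{\perp}\|^2=\cO(\beta^{-2})\|\psi\|^2$ (transverse gap), and that $\phi(0),\phi(L)=\cO(\beta^{-1})$ — the decay at rate $\beta/2$ just beyond the endpoints forces the near-Dirichlet boundary behaviour of $\phi$, which is precisely the conjectured mechanism. Combined with $g^{-2}=1+\cO(\beta^{-1})$ on the support of $\chi_\beta$ and $\int W\chi_\beta^2=-\tfrac14\kappa^2+\cO(\beta^{-2})$ this gives, for $\psi$ in the span $V_j$ of the first $j$ eigenfunctions,
\[
B_\beta^{\mathrm N}[\psi]\ge\int_0^L\Big(|\phi'|^2-\tfrac14\kappa^2|\phi|^2\Big)ds+\tau_\beta\|\phi\|^2-\cO\Big(\tfrac{\log\beta}{\beta}\Big)\|\psi\|^2,
\]
with $\|\psi\|^2=(1+o(1))\|\phi\|^2$ and $\psi\mapsto\phi$ nearly isometric on $V_j$. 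Correcting each $\phi$ by an $\cO(\beta^{-1})$ term to land in $H^1_0(0,L)$ and applying the $j$-dimensional min-max for $-\frac{d^2}{ds^2}-\tfrac14\kappa^2$ inside the resulting $j$-dimensional subspace, one finds $\psi^{*}\in V_j$ with $\int_0^L(|\phi^{*\prime}|^2-\tfrac14\kappa^2|\phi^{*}|^2)\ge(\mu_j-\cO(\beta^{-1}\log\beta))\|\phi^{*}\|^2$; since $B_\beta^{\mathrm N}[\psi^{*}]\le\lambda_j(B_\beta^{\mathrm N})\|\psi^{*}\|^2$, rearranging gives $E_j(\beta)\ge\lambda_j(B_\beta^{\mathrm N})\ge-\tfrac14\beta^2+\mu_j-\cO(\beta^{-1}\log\beta)$, which together with the upper bound proves the theorem.

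I expect the decay estimate — and, inside it, checking that near $\Gamma(0)$ and $\Gamma(L)$ the interaction behaves like an \emph{effectively Dirichlet} rather than Neumann endpoint — to be the real obstacle; away from the endpoints the transverse well of depth $\sim\beta^2$ makes everything routine, but at the ends the well disappears and one must propagate the decay through this transition (this is where the regular ends, giving genuine Fermi coordinates past $\Gamma(0),\Gamma(L)$, matter). The width $d\sim\beta^{-1}\log\beta$ is forced by making the transverse truncation errors $\cO(\beta^2e^{-\beta d})$ negligible while still letting $d\to0$, and it is the lower bound that produces the $\cO(\beta^{-1}\log\beta)$ remainder, the upper bound alone giving $\cO(\beta^{-2})$.
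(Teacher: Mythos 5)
Your strategy is genuinely different from the paper's at the decisive point, the lower bound on $E_j(\beta)$. The paper never analyses a Neumann-bracketed operator: it only uses the Dirichlet restriction $L_\beta$ to the curved box $\Pi(a)$ with $a=6\beta^{-1}\log\beta$, obtaining the upper bound $E_j\le\Lambda_j=-\tfrac{\beta^2}{4}+\mu_j+\cO(\beta^{-1}\log\beta)$ by squeezing $\Lambda_j$ between two operators that do admit the separation of variables of Exner--Yoshitomi (interaction on $\gamma$ over the short tube, resp.\ on the prolonged arc $\gamma_a$ over the long tube, the interval extension costing only $\cO(a)$ in the Dirichlet eigenvalues). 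The lower bound is then obtained by running the max--min in the opposite direction: the true eigenfunctions $u_{j,\beta}$ of $H_\beta$ are shown to decay pointwise like $e^{-(\beta-\log\beta)d(x,\gamma)/2}$ away from $\gamma$ via the explicit Krein/Birman--Schwinger representation $u_{j,\beta}=G_0(\cdot,\cdot;E)\ast F_{j,\beta}\delta_\gamma$ and Macdonald-function asymptotics (with an a priori bound $\|F_{j,\beta}\|_{L^2}=\cO(\beta^2\sqrt{\log\beta})$), and are then multiplied by a Nazarov-type logarithmic cutoff vanishing near $\partial\Pi(a)$; the resulting functions are admissible for $L_\beta$ and their Rayleigh quotients reproduce $E_j(\beta)$ up to $\cO(\beta^{-1}\log\beta)$. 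This circumvents entirely the effective one-dimensional operator and the question of which boundary condition it acquires. Your route --- Neumann bracketing, Agmon decay for the eigenfunctions of $B_\beta^{\mathrm N}$, projection onto the transverse ground state, and an effective longitudinal problem with asymptotically Dirichlet ends --- is a legitimate alternative and is the one adopted in several later works on similar truncated-interaction problems, but it is not what is done here.

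Within your sketch there is one step that is asserted rather than derived, and it is exactly the crux you yourself flag: the passage from the Agmon estimate to $\phi(0),\phi(L)=\cO(\beta^{-1})\|\phi\|$. An estimate of the form $\int e^{\alpha\beta\varrho}(|\psi|^2+|\nabla\psi|^2)\le C\|\psi\|^2$ with $\varrho\sim\dist(s,(0,L))$ controls the $L^2$ mass in $\{s<-\epsilon\}$ only by $e^{-\alpha\beta\epsilon}\|\psi\|^2$, which is small only for $\epsilon\gg\beta^{-1}$; at $s=0$ the weight equals $1$ and the trace $\int|\psi(0,u)|^2\,du$ is not forced to be small by this alone. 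To get $\phi(0)=\cO(\beta^{-1})\|\phi\|$ one needs the flux-matching argument: in $\{s<0\}$ the reduced function solves $-\phi''\approx E\phi$ with $E\approx-\beta^2/4$, so $\phi'(0^-)\approx\tfrac{\beta}{2}\phi(0)$, while on $(0,L)$ an $H^2$-type elliptic estimate bounds $\phi'(0^+)$ by $\|\phi\|$; equating the two gives the claim. Making this rigorous requires controlling the coupling of $\phi$ to the transverse excited modes precisely at $s=0,L$, where the $\delta$ potential is discontinuous in $s$ and regularity of the reduced equation is delicate --- this is a genuine piece of analysis, not a corollary of the decay estimate, and without it the correction of $\phi$ to an element of $H^1_0(0,L)$ within the $\cO(\beta^{-1}\log\beta)$ error budget is not justified. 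A minor further point: your claimed $\cO(\beta^{-2})$ remainder in the upper bound is too optimistic, since $W(s,u)=-\tfrac14\kappa(s)^2+\cO(|u|)$ and $\int|u|\chi_\beta^2\,du\sim\beta^{-1}$, so the test-function computation yields $\cO(\beta^{-1})$; this is harmless for the theorem but should be stated correctly.
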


Let us add that the optimality of the remainder term in
Proposition \ref{prop0} and Theorem \ref{thm1} represents an open question; we
shall not discuss it in the present paper. We only note that
it could be better in particular situations, e.g., the paper
\cite{ET} contained the example in which $\gamma$ was a circle and
the symmetry allowed to separate variables; in that case the
remainder in Proposition \ref{prop0} can be replaced by $\cO(\beta^{-2})$.

Before describing the scheme of the proof of Theorem~\ref{thm1}
let us briefly recall the construction of~\cite{EY} which was used to obtain
Proposition~1; this will make the meaning of the main result
more clear. Assume that $\gamma$ is a loop and that
$\gamma=\Gamma\big([0,L)\big)$, where
$\Gamma:\RR\ni s\mapsto\big(\Gamma_1(s),\Gamma_2(s)\big)\in \RR^2$ is an $L$-periodic
$C^4$ function whose restriction onto $[0,L)$ is an injection
and such that $|\Gamma'(s)|=1$ for all $s$.
We put
\begin{equation}
     \label{eq-tan}
\tau(s):=\begin{pmatrix}
\Gamma'_1(s)\\
\Gamma'_2(s)
\end{pmatrix},
\quad
n(s):=\begin{pmatrix}
-\Gamma'_2(s)\\
\Gamma'_1(s)
\end{pmatrix};
\end{equation}
in other words $\tau(s)$ is a unit tangent vector and $n(s)$ is a unit normal vector to $\gamma$ at the point $\Gamma(s)$, by assumption both continuously depending on the arc-length parameter.
Introduce the signed curvature $\kappa$ as previously and denote $\kappa_+:=\|\kappa\|_\infty$.
It can be shown that for sufficiently small $a>0$ the map
\[
\RR/L\ZZ\times (-a,a)\ni (s,t)\mapsto\Phi(s,t)=\Gamma(s)+tn(s)\in \RR^2
\]
is a diffeomorphism between $\square_a:=\RR/L\ZZ\times (-a,a)$
and $\Omega(a):=\Phi(\square_a)$ and
the map
$U_a:L^2\big(\Omega(a)\big)\to L^2(\square_a)$ defined by
$U_a f(s,t)=\sqrt{|J\Phi(s,t)|}f\big(\Phi(s,t)\big)$
is unitary; here $J\Phi$ is the Jacobian of $\Phi$.
Using the simple max-min argument one can show that for any  $j$
the inequalities $\Lambda_j^N(a,\beta)\le E_j(\beta)\le \Lambda_j^D(a,\beta)$
hold, where $\Lambda_j^{N/D}(a,\beta)$ are the properly ordered eigenvalues
of the operators $H^{N/D}(a,\beta)$ acting in $L^2\big(\Omega(a)\big)$
and associated with the forms $h^{N/D}_{a,\beta}$ given
by the identical expressions
\[
h^{N/D}_{a,\beta}(f,f)=\iint_{\Omega(a)}|\nabla f|^2dx-\beta \int_\gamma |f|^2ds,
\]
$h^N_{a,\beta}$ being defined on $H^1\big(\Omega(a)\big)$
and $h^D_{a,\beta}$ on $H^1_0\big(\Omega(a)\big)$.
The main ingredient of \cite{EY} was to show that the eigenvalues
$\Lambda_j^{N/D}(a,\beta)$ can be estimated using operators with separated variables
if $a$ and $\beta$ are related in a special way and $\beta$ is large.
More precisely, by estimating the expressions $h^{N/D}_{a,\beta}(U_a f,U_a f)$
and applying the max-min principle
it was shown that for any $j$ one has $\Tilde \Lambda_j^-(a,\beta)\le \Lambda_j^N(a,\beta)$
and $\Lambda_j^D(\beta,a)\le \Tilde \Lambda_j^+(a,\beta)$, where $\Tilde \Lambda_j^{\pm}(a,\beta)$
are the properly ordered eigenvalues of operators $\Tilde H^{\pm}_{a,\beta}$
in $L^2(\square_a)$ with separated variables, i.e.
$\Tilde H^\pm_{a,\beta}=L^\pm_a\otimes 1 + 1\otimes T^\pm_{a,\beta}$,
where $L^\pm_a$ are Schr\"odinger operators in $L^2(\RR/L\ZZ)$,
\[
L^\pm_a:=-(1\mp a \kappa_+)^2 \dfrac{d^2}{ds^2}+V^\pm_a (s)
\text{ with } V^\pm_a(s)=-\dfrac{\kappa(s)^2}{4}+\cO(a),
\]
acting on the common domain $H^2(\RR/L\ZZ)$,
and $T^\pm_{a,\beta}$ some explicit second order differential operators in $L^2(-a,a)$.
Let $\mu^\pm_j(a)$ denote the properly ordered eigenvalues
of $L^\pm_a$. It was observed in \cite{EY} that by taking $a:=6\beta^{-1}\log\beta$,
one achieves, for any fixed $j\in\NN$, the simultaneous validity of the estimates $\mu^\pm_j(a)=\mu_j+\cO(\beta^{-1}\log\beta)$
and of some estimates for the eigenvalues of $T^\pm_{a,\beta}$;
in particular, $\inf\sigma(T^\pm_{a,\beta})=-\beta^2/4+\cO(\beta^{-1}\log\beta)$.
Putting these estimates together gives the assertion of Proposition~\ref{prop0}.

Now let us get back to the situation we are concerned with here, i.e. a curve $\gamma$ with endpoints.
For any $\alpha\in(0,l_0)$
denote
\[
\Omega(\alpha):=\big\{\Gamma(s) +t n(s):\,
(s,t)\in (0,L)\times (-\alpha,\alpha)\big\}
\]
and compare $H_\beta$ with two operators acting in $\Omega(a)$ as in the case of a loop.
By applying the previously described constructions one can easily show the inequalities
$\Tilde \Lambda^-_j(a,\beta)\le E_j(\beta)\le \Tilde \Lambda^+_j(a,\beta)$,
where $\Tilde \Lambda^{\pm}_j(a,\beta)$ are respectively the eigenvalues of operators $\Tilde H^\pm_{a,\beta}$
with separated variables,
$\Tilde H^{\pm}_{a,\beta}=\Tilde L^{\pm}_a\otimes 1 + 1\otimes T^\pm_{a,\beta}$,
where $\Tilde L^-_a$ acts in $L^2(0,L)$ and is essentially given by the same expression as $L^-_a$
but with the Neumann boundary conditions, while $\Tilde L^+_a$
acts in $L^2(0,L)$ and is given by the same expression as $L^+_a$
but with the Dirichlet boundary conditions. 
Denoting $\tilde\mu_j^\pm(a)$ the respective
eigenvalues, it is easy to see that the estimate $\tilde\mu_j^+(a)
-\tilde\mu_j^-(a) = \mathcal{O}(\beta^{-1}\log\beta)$ which played
crucial role in the above described argument cannot be obtained
due to the fact that the boundary conditions at the interval ends
are different and the difference is $\cO(1)$,
and setting $a:=6\beta^{-1}\log\beta$ gives only the two-side estimate~\eqref{eq-brak}.

Therefore, we need to proceed differently and to supply the preceding approach
with a more detailed analysis of the eigenfunctions of $H_\beta$
near the endpoints of $\gamma$.
We employ the same notation
\eqref{eq-tan} for $s\in(-l_0,L+l_0)$ and denote $K:=\|\kappa\|_{L^\infty(-l_0,L+l_0)}$.
In addition to the above domain $\Omega(\alpha)$, for
$\alpha\in(0,l_0)$ let us introduce the following subdomains in $\RR^2$:
\[
P(\alpha):=(-\alpha,L+\alpha)\times (-\alpha,\alpha),\quad
\Pi(\alpha):=\big\{\Gamma(s) +t n(s):\,
(s,t)\in P(\alpha)\big\}
\]
and the prolonged arc
$\gamma_\alpha:=\Gamma\big((-\alpha,L+\alpha)\big) \subset \Pi(\alpha)$.
Clearly, $\gamma\subset\gamma_\alpha$ for any $\alpha>0$. Furthermore, one can check as in \cite{EY} that there is $a_0 \in \Big(0, \dfrac{1}{2K}\Big)$ such that the map
\begin{equation} \label{curvilinear}
P(a)\ni(s,t)\mapsto\Phi(s,t)=\Gamma(s)+tn(s)\in \Pi(a)
\end{equation}
is a diffeomorphism for any fixed $a\in (0,a_0]$. With the
previous discussion in mind,
throughout the rest of the paper we will always use
\begin{equation} \label{abeta}
a=\dfrac{6\log\beta}{\beta}.
\end{equation}
Let us introduce the following sesquilinear forms:
\begin{align*}
h_{\beta,a}(f,f)&=\iint_{\Pi(a)} |\nabla f|^2dx -\beta \int_\gamma |f|^2 dS, & \quad f\in H^1_0\big(\Pi(a)\big),\\
\Tilde h_{\beta,a}(f,f)&=\iint_{\Omega(a)} |\nabla f|^2dx -\beta \int_\gamma |f|^2 dS, & \quad f\in H^1_0\big(\Omega(a)\big),\\
\Hat h_{\beta,a}(f,f)&=\iint_{\Pi(a)} |\nabla f|^2dx -\beta \int_{\gamma_a} |f|^2 dS, & \quad f\in H^1_0\big(\Pi(a)\big),
\end{align*}
and denote the associated self-adjoint operators, acting respectively in $L^2\big(\Pi(a)\big)$, $L^2\big(\Omega(a)\big)$ and $L^2\big(\Pi(a)\big)$, by $L_\beta$, $\Tilde L_\beta$, $\Hat L_\beta$. We consider their eigenvalues $\Lambda_j(\beta)$, $\Tilde \Lambda_j(\beta)$, $\Hat\Lambda_j(\beta)$ enumerated in the non-decreasing order taking their multiplicities into account; by the max-min principle we have
\begin{equation}
       \label{eq-ela}
E_j(\beta)\le \Lambda_j(\beta).
\end{equation}

The asymptotic behavior of the right-hand side can be found easily:
\begin{prop}
      \label{prop1}
For any fixed $j\in\NN$, for sufficiently large $\beta$ one has
\begin{equation}
   \label{estl}
\Lambda_j(\beta)=-\dfrac{\beta^2}{4}+\mu_j^D +\cO\Big(\dfrac{\log\beta}{\beta}\Big).
\end{equation}
\end{prop}
\begin{proof}
Due to the max-min principle for any $j\in\NN$ we have the inequality
$\Hat \Lambda_j(\beta)\le \Lambda_j(\beta)\le \Tilde\Lambda_j(\beta)$.
Furthermore, the asymptotics of the estimating eigenvalues $\Tilde \Lambda_j$ and $\Hat\Lambda_j$ can be obtained using the technique introduced in \cite{EY} as explained above, and this gives
\[
\Tilde \Lambda_j(\beta)=-\dfrac{\beta^2}{4}+\mu^D_j +\cO\Big(\dfrac{\log\beta}{\beta}\Big),
\quad
\Hat \Lambda_j(\beta)=-\dfrac{\beta^2}{4}+\mu^D_j(\beta) +\cO\Big(\dfrac{\log\beta}{\beta}\Big),
\]
where $\mu_j^D$ and $\mu_j^D(\beta)$ are the $j$th Dirichlet eigenvalues of the operators given by the differential expression
$D$ from \eqref{eq-ddd} on $(0,L)$ and $(-a,L+a)$, respectively; recall that $a$ depends on $\beta$ as in~\eqref{abeta}. As the Dirichlet eigenvalues are $C^1$ functions of the interval edges, see e.g. \cite{DH}, we have $\mu^D_j(\beta)=\mu^D_j +\cO(a)=\mu_j^D+\cO(\beta^{-1}\log\beta)$, which proves the result.
\end{proof}

Hence the claim of Theorem \ref{thm1} will be a consequence of the following asymptotic relation:
\begin{prop}
        \label{prop2}
For any fixed $j\in\NN$ one has $\Lambda_j(\beta)-E_j(\beta)=\cO(\beta^{-1}\log\beta)$
as the coupling parameter $\beta$ tends to $+\infty$.
\end{prop}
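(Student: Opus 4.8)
The inequality $E_j(\beta)\le\Lambda_j(\beta)$ is already available from \eqref{eq-ela}, so the plan is to prove the matching bound $\Lambda_j(\beta)\le E_j(\beta)+\cO\big(\tfrac{\log\beta}{\beta}\big)$ by inserting suitably localized eigenfunctions of $H_\beta$ into the min--max principle for $L_\beta$. For $\beta$ large, Proposition~\ref{prop1} gives $\Lambda_j(\beta)<0$, so by \eqref{eq-ela} the operator $H_\beta$ has at least $j$ eigenvalues below the bottom $0$ of its essential spectrum; I fix $L^2(\RR^2)$-orthonormal eigenfunctions $\psi_1,\dots,\psi_j$ for $E_1(\beta)\le\dots\le E_j(\beta)$, put $\cL_j=\operatorname{span}(\psi_1,\dots,\psi_j)$, and record that $-E_i(\beta)\ge-\Lambda_i(\beta)=\tfrac{\beta^2}{4}-\mu_i+\cO\big(\tfrac{\log\beta}{\beta}\big)$, whence $\sqrt{-E_i(\beta)}=\tfrac{\beta}{2}\big(1+o(1)\big)$ and in particular $-E_i(\beta)\ge\tfrac{\beta^2}{8}$ for all $i\le j$ and all large $\beta$.

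Next I fix $\theta\in\big(\tfrac12,1\big)$ and build a cut-off $\chi=\eta\circ\Phi^{-1}$ on $\Pi(a)$, extended by $0$ to $\RR^2$, with $\eta\in C_c^\infty\big(P(a)\big)$, $0\le\eta\le1$, $\eta\equiv1$ on $P(\theta a)$, $|\nabla\eta|\le C/a$, $|D^2\eta|\le C/a^2$; since $\Phi$ is a $C^4$ diffeomorphism with Jacobian bounded away from $0$ on $P(a_0)$, the function $\chi$ inherits the bounds $|\nabla\chi|\le C/a$, $|\Delta\chi|\le C/a^2$ and satisfies $\chi\equiv1$ on $\Pi(\theta a)\supset\gamma$. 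For $\psi=\sum_{i\le j}c_i\psi_i$ one has $\chi\psi\in H_0^1\big(\Pi(a)\big)$ and $\int_\gamma|\chi\psi|^2\,dS=\int_\gamma|\psi|^2\,dS$, and — integrating the cross term by parts, legitimately because $\chi$ is constant near $\gamma$ so that $\nabla\chi$ is supported where each $\psi_i$ is $C^\infty$ — I obtain the identity
\[
h_{\beta,a}(\chi\psi,\chi\psi)=h_\beta(\psi,\psi)-\iint_{\Pi(a)}(1-\chi^2)|\nabla\psi|^2\,dx-\iint_{\Pi(a)}\chi\,\Delta\chi\,|\psi|^2\,dx .
\]
The middle term is $\le0$ and is discarded; since $\Delta\chi$ is supported in $\Pi(a)\setminus\Pi(\theta a)$ with $|\Delta\chi|\le Ca^{-2}$ and, by Cauchy--Schwarz, $\iint_{\RR^2\setminus\Pi(\theta a)}|\psi|^2\,dx\le j\,\delta_\beta\|\psi\|^2$ where $\delta_\beta:=\max_{i\le j}\iint_{\RR^2\setminus\Pi(\theta a)}|\psi_i|^2\,dx$, the last term is at most $Cja^{-2}\delta_\beta\|\psi\|^2$. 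Using $h_\beta(\psi,\psi)=\sum_{i\le j}|c_i|^2E_i(\beta)\le E_j(\beta)\|\psi\|^2$ and $\|\chi\psi\|^2\ge(1-j\delta_\beta)\|\psi\|^2$, the upshot is that \emph{provided} $\delta_\beta=\cO(\beta^{-3})$, the map $\psi\mapsto\chi\psi$ is injective on $\cL_j$ for large $\beta$, its image is an admissible $j$-dimensional test space, and the min--max principle gives
\[
\Lambda_j(\beta)\le\frac{E_j(\beta)+Cja^{-2}\delta_\beta}{1-j\delta_\beta}\le E_j(\beta)+\cO\Big(\tfrac{\log\beta}{\beta}\Big),
\]
the final step using $E_j(\beta)<0$, $|E_j(\beta)|=\cO(\beta^2)$, $j\delta_\beta=\cO(\beta^{-3})$ and $a^{-2}\delta_\beta=\cO\big(\beta^{-1}(\log\beta)^{-2}\big)$. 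Combined with \eqref{eq-ela} this is exactly the assertion of the Proposition.

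The one substantial thing still to establish — and the technical heart of the argument — is the decay estimate $\delta_\beta=\cO(\beta^{-3})$, which I would prove by an Agmon-type weighted bound. For $i\le j$ set $\rho(x)=(1-\varepsilon)\sqrt{-E_i(\beta)}\,\min\{\dist(x,\gamma),a_0/2\}$ with a small fixed $\varepsilon>0$; $\rho$ is bounded and Lipschitz with $|\nabla\rho|\le(1-\varepsilon)\sqrt{-E_i(\beta)}$ a.e.\ and vanishes on $\gamma$, so $v=e^{2\rho}\psi_i$ is admissible in the weak eigenvalue identity $h_\beta(\psi_i,v)=E_i(\beta)\langle\psi_i,v\rangle$. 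Expanding $\big|\nabla(e^\rho\psi_i)\big|^2$ turns that identity into
\[
\iint\big|\nabla(e^\rho\psi_i)\big|^2\,dx+\big(-E_i(\beta)\big)\iint e^{2\rho}|\psi_i|^2\,dx=\iint|\nabla\rho|^2\,e^{2\rho}|\psi_i|^2\,dx+\beta\int_\gamma|\psi_i|^2\,dS,
\]
where $\rho|_\gamma=0$ was used in the surface term; since $|\nabla\rho|^2\le(1-\varepsilon)\big(-E_i(\beta)\big)$ this gives $\varepsilon\big(-E_i(\beta)\big)\iint e^{2\rho}|\psi_i|^2\le\beta\int_\gamma|\psi_i|^2$. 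Combined with the a priori bound $\int_\gamma|\psi_i|^2\,dS\le C\beta$ — which follows from the form trace inequality $\int_\gamma|f|^2\,dS\le s\|\nabla f\|^2+\tfrac{C}{s}\|f\|^2$ at $s=\tfrac1{2\beta}$ together with $\|\nabla\psi_i\|^2=E_i(\beta)+\beta\int_\gamma|\psi_i|^2$, which forces $\|\nabla\psi_i\|^2=\cO(\beta^2)$ — and with $-E_i(\beta)\ge\tfrac{\beta^2}{8}$, I get $\iint e^{2\rho}|\psi_i|^2\,dx\le C$ uniformly in $i\le j$. Since $\RR^2\setminus\Pi(\theta a)\subset\{\dist(\cdot,\gamma)\ge\theta'a\}$ for some $\theta'\in(\tfrac12,\theta)$ and all large $\beta$ (the discrepancy between the flat-ended box $\Pi(\theta a)$ and the exact distance tube being $\cO(a^2)$), it follows, using $\sqrt{-E_i(\beta)}=\tfrac\beta2(1+o(1))$ and — decisively — the choice $a=\tfrac{6\log\beta}{\beta}$ from \eqref{abeta}, that
\[
\delta_\beta\le\max_{1\le i\le j}\,e^{-2(1-\varepsilon)\sqrt{-E_i(\beta)}\,\theta'a}\iint e^{2\rho}|\psi_i|^2\,dx\le C\,\beta^{-6\theta'(1-\varepsilon)+o(1)},
\]
which is $\cO(\beta^{-3})$ as soon as $6\theta'(1-\varepsilon)>3$, i.e.\ for $\theta,\theta'$ near $1$ and $\varepsilon$ small. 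The two places where I expect to have to be careful are the treatment of the singular $\delta$-interaction when passing to the weighted identity (the surface term over $\gamma$ that produces $\beta\int_\gamma|\psi_i|^2$ survives precisely because $\rho$ vanishes on $\gamma$, while $\psi_i$ is only $H^1$ across $\gamma$) and the elementary but fussy small-$a$ geometry comparing the box $\Pi(\theta a)$ with the genuine tubular neighbourhood near the endpoints of $\gamma$; each costs only a constant factor in the decay exponent, which is exactly why the constant in \eqref{abeta} is taken as large as $6$.
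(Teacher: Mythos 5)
Your argument is correct, and it reaches the conclusion by a genuinely different route than the paper. The paper never integrates by parts: it multiplies the eigenfunctions by a logarithmically mollified cut-off $g_\beta$ (Nazarov-type, so that $\iint|\nabla g_\beta|\,dx=\cO(1)$ and $\iint|\nabla g_\beta|^2dx=\cO(\beta\log\beta)$ on a shell of width $\sim\beta^{-1}$) and therefore needs \emph{pointwise} bounds on $u_{j,\beta}$ and $\nabla u_{j,\beta}$ in the transition region; these are obtained from the Krein/Birman--Schwinger representation $u=\int_\gamma K_0(\sqrt{-E}\,|x-y|)F\,dS_y$ together with the delicate bound $\|F\|_{L^2(\gamma)}=\cO(\beta^2\sqrt{\log\beta})$ (Lemmas \ref{lem2}--\ref{lem6}). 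You instead use the IMS localization identity with an ordinary cut-off ($|\nabla\chi|\le C/a$, $|\Delta\chi|\le C/a^2$), which requires no gradient information at all in the shell --- only $\int|\psi|^2$ over the complement of $\Pi(\theta a)$ weighted by $a^{-2}$ --- and you supply that by an Agmon-type $L^2$ weighted estimate, whose ingredients (the weighted form identity for a bounded Lipschitz weight vanishing on $\gamma$, the trace inequality $\int_\gamma|f|^2\le s\|\nabla f\|^2+Cs^{-1}\|f\|^2$, and Corollary \ref{corol1} to compare the box $\Pi(\theta a)$ with the distance tube) are all available and correctly assembled; the choice $a=6\log\beta/\beta$ gives $\delta_\beta=\cO(\beta^{-6\theta'(1-\varepsilon)+o(1)})$, comfortably beating the required $\beta^{-3}$. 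Your approach buys a substantial simplification --- the whole Green-function machinery and the logarithmic mollifier become unnecessary for Proposition \ref{prop2}, and your version of this step even yields $\cO(\beta^{-1})$ rather than $\cO(\beta^{-1}\log\beta)$ --- at the price of the weighted-identity bookkeeping for the singular interaction, which you handle correctly. Two cosmetic points: the term you discard should be written as $\iint_{\RR^2}(1-\chi^2)|\nabla\psi|^2dx$ rather than an integral over $\Pi(a)$ only (it is still nonnegative, so nothing changes), and for complex eigenfunctions the cross term in the IMS and Agmon expansions should be $2\operatorname{Re}(\bar\psi\,\nabla\rho\cdot\nabla\psi)$, which is harmless since one may take the $\psi_i$ real.
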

This is our main estimate and the rest of the paper will dedicated to the its proof.
First, we collect in section~\ref{sec-tec} some technical details
concerning the analysis is small neighborhoods of $\gamma$.
In section~\ref{sec-ef} we use an integral representation of the eigenfunctions
to show suitable decay properties away of $\gamma$; this is the key point of the argument.
In section~\ref{sec-co} we introduce and analyze a family of cut-off eigenfunctions,
which are used in section \ref{sec-mm} as test functions for the max-min principle,
and this completes the proof.

We remark that the remainder estimate in Proposition~\ref{prop2} can be improved in various directions, in particular, by a suitable
inspection of the proof of~Lemma~\ref{lem2} below. Such improvements will not affect the remainder estimate
in Theorem~\ref{thm1}, however, as a larger error term comes already from \eqref{estl}, therefore, we did not try to
optimize in this direction.

\section{Technical estimates}\label{sec-tec}

We denote by $d(x,\gamma)$ the distance between a point $x\in\RR^2$ and the arc $\gamma$. In the present section we collect
some expressions for $d(x,\gamma)$ as $x\in \Pi(a)$ which we need in the following. 
Recall first the Frenet formul{\ae}
\begin{equation}
      \label{eq-tn}
\tau'(s)=\kappa(s) n(s), \quad n'(s)=-\kappa(s)\tau(s).
\end{equation}
In particular, for all $(s,t),(s',t')\in P(a_0)$
one has the representations
\begin{align}
      \label{eq-fg}
\Gamma(s')&=\Gamma(s)+(s'-s)\tau(s)+(s'-s)^2\rho_1(s',s),\\
      \label{eq-fn}
n(s')&=n(s) -(s'-s)\kappa(s)\tau(s) +(s'-s)^2\rho_2(s',s),\\
      \label{eq-ft}
\tau(s')&=\tau(s) +(s'-s)\kappa(s)n(s) +(s'-s)^2\rho_3(s',s),\\
\Phi(s',t')&=\Phi(s,t)+(s'-s)\big(1-t'\kappa(s)\big)\tau(s)+(t'-t)n(s) \nonumber\\
&\qquad
+(s'-s)^2\Big(
\rho_1(s',s)+t'\rho_2(s',s)\nonumber
\Big)\\
&=\Phi(s,t)+(s'-s)\big(1-t'\kappa(s)\big)\tau(s)+(t'-t)n(s) \nonumber\\
&\qquad + (s'-s)^2 \rho_4(s,t,s',t'),
         \label{eq-tay}
\end{align}
with
$\rho_1,\rho_2,\rho_3\in L^\infty\big((-a_0,L+a_0)^2\big)$ and
$\rho_4\in L^\infty\big(P(a_0)^2\big)$.
It is straightforward to check that for any $\alpha\in(0,a_0)$
 there are $C_1,C_2>0$ such that
\begin{equation}
        \label{eq-lipf}
C_1 \big((s-s')^2+(t-t')^2\big)\le
\big|\Phi(s,t)-\Phi(s',t')\big|^2\le
C_2 \big((s-s')^2+(t-t')^2\big)
\end{equation}
holds for all $(s,t),(s',t')\in P(\alpha)$.

The following lemma is rather obvious from the geometric point of view
and its proof is omitted as this is a routine based on the Frenet formul{\ae}.

\begin{lemma}\label{lemp3}
There exists $\alpha\in(0,a_0)$ such that
\[
d\big(\Phi(s,t),\gamma\big)= \begin{cases}
|t|, & \text{for } (s,t)\in(0,L)\times (-\alpha,\alpha),\\
\big|\Phi(s,t)-\Phi(0,0)\big|,& \text{for }  (s,t)\in(-\alpha,0)\times (-\alpha,\alpha),\\
\big|\Phi(s,t)-\Phi(L,0)\big|,& \text{for }  (s,t)\in(L,L+\alpha)\times (-\alpha,\alpha).
\end{cases}
\]
\end{lemma}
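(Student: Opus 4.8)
The plan is to follow the scheme of the proof of Lemma~\ref{lemp2}, the essential new feature being that the point of $\gamma$ nearest to $\Phi(s,t)$ will now be an \emph{endpoint} of the parameter interval rather than an interior critical point. It suffices to treat the left endpoint; the right one is handled symmetrically, e.g.\ by reversing the parametrization of the arc. So fix $\alpha\in(0,a_0)$, to be specified below, and $(s,t)\in(-\alpha,0)\times(-\alpha,\alpha)$, and set
\[
f(\sigma):=\big|\Phi(s,t)-\Gamma(\sigma)\big|^2,\qquad\sigma\in[0,L].
\]
Since $\gamma=\Gamma\big((0,L)\big)$ and $\Gamma$ is continuous on $[0,L]$, we have $d\big(\Phi(s,t),\gamma\big)^2=\min_{\sigma\in[0,L]}f(\sigma)$, and exactly as in Lemma~\ref{lemp2},
\[
f'(\sigma)=2\,\tau(\sigma)\cdot\big(\Gamma(\sigma)-\Gamma(s)\big)-2t\,n(s)\cdot\tau(\sigma).
\]

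First I would analyse $f'$ for $\sigma$ close to $0$. Inserting \eqref{eq-fg} and \eqref{eq-ft} with $s'$ replaced by $\sigma$, and using $|\tau(s)|=1$ and $n(s)\cdot\tau(s)=0$, one obtains
\[
\tau(\sigma)\cdot\big(\Gamma(\sigma)-\Gamma(s)\big)=(\sigma-s)+\cO\big((\sigma-s)^2\big),\qquad
n(s)\cdot\tau(\sigma)=\cO\big(|\sigma-s|\big),
\]
with the $\cO$-terms uniform in $(s,t)\in P(a_0)$ and $\sigma\in[0,L]$. Since $s<0\le\sigma$, hence $\sigma-s\ge|s|>0$, it follows that there exist $\delta_1\in(0,a_0)$ and $\alpha_1\in(0,a_0)$ such that $f'(\sigma)\ge\sigma-s>0$ for every $\sigma\in[0,\delta_1]$ whenever $(s,t)\in(-\alpha_1,0)\times(-\alpha_1,\alpha_1)$. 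Thus $f$ is strictly increasing on $[0,\delta_1]$, and in particular $f(\sigma)\ge f(0)$ there.

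It remains to bound $f$ from below on $[\delta_1,L]$. Fixing once a reference value $\alpha'\in(0,a_0)$ and letting $C_1,C_2>0$ be the constants of Lemma~\ref{lemp1} for $P(\alpha')$ --- which then also serve on every smaller $P(\alpha)$ --- we get for $\sigma\in[\delta_1,L]$, using $\sigma-s>\sigma\ge\delta_1$,
\[
f(\sigma)=\big|\Phi(s,t)-\Phi(\sigma,0)\big|^2\ge C_1\big((s-\sigma)^2+t^2\big)\ge C_1\delta_1^2,
\]
while $f(0)=\big|\Phi(s,t)-\Phi(0,0)\big|^2\le C_2(s^2+t^2)\le 2C_2\alpha^2$. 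Hence if we finally choose $\alpha<\min\big(\alpha',\alpha_1,\delta_1\sqrt{C_1/(2C_2)}\,\big)$, then $f(\sigma)>f(0)$ on $[\delta_1,L]$ as well; together with the previous paragraph this yields $\min_{\sigma\in[0,L]}f(\sigma)=f(0)$, that is, $d\big(\Phi(s,t),\gamma\big)=\big|\Phi(s,t)-\Phi(0,0)\big|$, as claimed.

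I expect the first step to be the only delicate point: in contrast with Lemma~\ref{lemp2}, where $s$ is an interior critical point, here $f'(0)$ is merely of order $|s|$ and can be arbitrarily small, so one must show that $f'$ stays strictly positive on a \emph{fixed} interval $[0,\delta_1]$, uniformly in $(s,t)$, which forces the Taylor estimates above to be carried out with some care and the thresholds $\delta_1$, $\alpha_1$, $\alpha$ to be fixed in the right order so as to be mutually compatible. The remainder --- comparing the near contribution with the far one via the bi-Lipschitz bound of Lemma~\ref{lemp1} --- is routine.
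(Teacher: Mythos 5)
Your proof is correct and follows essentially the same route as the paper's: show via a Taylor expansion that $\sigma\mapsto|\Phi(s,t)-\Gamma(\sigma)|^2$ is strictly increasing on a fixed parameter interval adjacent to the endpoint, then exclude the remote part of the arc. The only cosmetic differences are that you expand around $s$ rather than around $\sigma=0$, and that you dispose of the far part $\sigma\ge\delta_1$ by directly comparing $f(0)\le 2C_2\alpha^2$ with $f(\sigma)\ge C_1\delta_1^2$ via Lemma~\ref{lemp1}, whereas the paper localizes the infimum to a small ball around $\Gamma(0)$ — two packagings of the same fact.
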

As a direct corollary we obtain
\begin{lemma}\label{lemp4}
For $s<0$ we have in the limit $(s,t)\to 0$ the relation
\begin{equation}
      \label{eq-fst}
d\big(\Phi(s,t),\gamma\big)=\sqrt{s^2+t^2} +\cO(s^2+t^2).
\end{equation}
Similarly, for $s>L$ and $(s,t)\to (L,0)$ we have
\[
d\big(\Phi(s,t),\gamma\big)=\sqrt{(s-L)^2+t^2} +\cO\big((s-L)^2+t^2\big).
\]
\end{lemma}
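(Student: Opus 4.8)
The plan is to invoke Lemma~\ref{lemp3} in order to replace $d(\Phi(s,t),\gamma)$ by the explicit quantity $|\Phi(s,t)-\Phi(0,0)|$, and then to expand the latter by means of the Taylor formul{\ae} \eqref{eq-fg}, \eqref{eq-fn}. Concretely, after shrinking $\alpha$ if necessary we have from Lemma~\ref{lemp3} that
\[
d\big(\Phi(s,t),\gamma\big)=\big|\Phi(s,t)-\Phi(0,0)\big|=\big|\Gamma(s)-\Gamma(0)+tn(s)\big|
\]
for all $(s,t)\in(-\alpha,0)\times(-\alpha,\alpha)$, so it remains to estimate the right-hand side as $(s,t)\to 0$.

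First I would substitute $\Gamma(s)=\Gamma(0)+s\tau_0+s^2\rho_1(s,0)$ and $n(s)=n_0-s\kappa_0\tau_0+s^2\rho_2(s,0)$, where as in the proof of Lemma~\ref{lemp3} we write $\tau_0=\tau(0)$, $n_0=n(0)$, $\kappa_0=\kappa(0)$. This gives
\[
\Phi(s,t)-\Phi(0,0)=s(1-t\kappa_0)\tau_0+tn_0+s^2\big(\rho_1(s,0)+t\rho_2(s,0)\big).
\]
It is convenient to abbreviate $r:=\sqrt{s^2+t^2}$ and to note that the leading vector $v:=s(1-t\kappa_0)\tau_0+tn_0$ satisfies $|v|=\cO(r)$, while the remainder $w:=s^2(\rho_1(s,0)+t\rho_2(s,0))$ is $\cO(r^2)$ because $\rho_1,\rho_2$ are bounded on $P(a_0)^2$. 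Since $\tau_0\perp n_0$ are unit vectors, squaring and estimating the cross term by Cauchy--Schwarz yields
\[
\big|\Phi(s,t)-\Phi(0,0)\big|^2=s^2(1-t\kappa_0)^2+t^2+\cO(r^3)=s^2+t^2+\cO(r^3),
\]
where in the last step I used $s^2(1-t\kappa_0)^2=s^2-2\kappa_0 s^2t+\kappa_0^2 s^2t^2=s^2+\cO(r^3)$.

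Finally I would take the square root: $|\Phi(s,t)-\Phi(0,0)|=r\sqrt{1+\cO(r)}=r+\cO(r^2)$, which is precisely \eqref{eq-fst}. The assertion at the endpoint $s=L$ follows in exactly the same way, expanding about $s=L$ and using the second alternative in Lemma~\ref{lemp3}. There is no genuine obstacle here beyond careful bookkeeping; the only point deserving attention is to make sure every correction term is collected into a single $\cO(s^2+t^2)$, which is why it pays to carry out the whole computation in terms of the homogeneous quantity $r=\sqrt{s^2+t^2}$, observing that each perturbation of $v=s\tau_0+tn_0$ is of order $r^2$ whereas $|v|=r$, so that all cross terms are automatically $\cO(r^3)$.
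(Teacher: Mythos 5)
Your proposal is correct and follows essentially the same route as the paper: reduce to $|\Phi(s,t)-\Phi(0,0)|$ via Lemma~\ref{lemp3}, expand with \eqref{eq-fg}--\eqref{eq-fn}, use the orthogonality of $\tau_0$ and $n_0$ to get $d^2=s^2+t^2+\cO\big((s^2+t^2)^{3/2}\big)$, and take the square root. The bookkeeping in terms of $r=\sqrt{s^2+t^2}$ matches the paper's $ts^2A+s^3B$ remainder exactly.
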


Applying Lemmata \ref{lemp3} and \ref{lemp4} to the boundary of $\Pi(a)$ we obtain
\begin{corol}\label{corol1}
There are $\alpha_0\in (0,a_0)$ and $C>0$ such that
for all $\alpha\in(0,\alpha_0)$ and $x\in \partial\Pi(\alpha)$
we have $d(x,\gamma)\ge \alpha-C\alpha^2$.
\end{corol}

For a fixed $b>0$ we introduce the set
\[
W(b):=\big\{x\in\RR^2:d(x,\gamma)<b\big\}.
\]
and derive an integral estimate on the complement of such a neighborhood:

\begin{lemma}
        \label{lem3}
Let $k,c>0$. In the limit $\beta\to+\infty$ we have
\[
\iint_{\RR^2\setminus W\big(\frac{k \log\beta-c}{\beta}\big)} e^{-(\beta-\log\beta)d(x,\gamma)}\,dx= \cO\Big(\dfrac{1}{\beta^{k+1}}\Big).
\]
\end{lemma}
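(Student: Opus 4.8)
The plan is to exploit the fact that $W(b)$ is, up to a bounded region near the endpoints, a tubular neighbourhood of the arc, so that outside $W(b)$ the distance function $d(x,\gamma)$ can be bounded below by the transverse coordinate (or a multiple thereof), and then the integral reduces to a one-dimensional Gaussian-type estimate in the normal direction. Concretely, write $b=b(\beta):=\frac{k\log\beta-c}{\beta}$ and note that $b\to 0$ as $\beta\to+\infty$, so for large $\beta$ we have $b<\alpha_0$ with $\alpha_0$ from Corollary~\ref{corol1} and also $b$ smaller than any of the finitely many thresholds $\alpha$ produced by Lemmata~\ref{lemp2}, \ref{lemp3}, \ref{lemp4}.

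First I would split $\RR^2\setminus W(b)$ into two pieces: the part lying in a fixed curvilinear strip $\Pi(\alpha_0)$ around the prolonged arc, and the exterior part $\RR^2\setminus\Pi(\alpha_0)$. On the exterior part one has $d(x,\gamma)\ge d(x,\gamma_{\alpha_0})+\mathrm{const}$ or, more simply, $d(x,\gamma)\ge \dist(\RR^2\setminus\Pi(\alpha_0),\gamma)=:\delta_0>0$ is bounded below by a positive constant independent of $\beta$; moreover, since $\gamma$ has finite length, $d(x,\gamma)\ge |x|-R_0$ for $|x|$ large, so $e^{-(\beta-\log\beta)d(x,\gamma)}$ is integrable there with integral bounded by $C e^{-(\beta-\log\beta)\delta_0}$, which decays faster than any power of $1/\beta$ and is therefore $\cO(\beta^{-k-1})$. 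The only real work is the strip $\Pi(\alpha_0)\setminus W(b)$.

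On the strip I would change variables via the diffeomorphism $\Phi$ of \eqref{curvilinear}, whose Jacobian $1-t\kappa(s)$ is bounded above and below by positive constants on $P(\alpha_0)$ (recall $\alpha_0<a_0<\frac{1}{2K}$). Using Lemmata~\ref{lemp2} and~\ref{lemp4}, there is a constant $C>0$ such that $d(\Phi(s,t),\gamma)\ge c_1\sqrt{\,\mathrm{dist}((s,t),[0,L]\times\{0\})^2\,}\ge c_1 |t|$ whenever $(s,t)\in[0,L]\times(-\alpha_0,\alpha_0)$ (where in fact $c_1=1$), and $d(\Phi(s,t),\gamma)\ge c_1\sqrt{s^2+t^2}\ge c_1|t|$ for the two end-caps $s\in(-\alpha_0,0)$ and $s\in(L,L+\alpha_0)$, after possibly shrinking $\alpha_0$ so the $\cO$-terms in Lemma~\ref{lemp4} are absorbed. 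Hence on $\Pi(\alpha_0)$ one has the uniform lower bound $d(\Phi(s,t),\gamma)\ge c_1|t|$, and outside $W(b)$ we additionally have $d(\Phi(s,t),\gamma)\ge b$. Therefore
\[
\iint_{\Pi(\alpha_0)\setminus W(b)} e^{-(\beta-\log\beta)d(x,\gamma)}\,dx
\le C\iint_{P(\alpha_0)} \mathbf{1}_{\{c_1|t|\ge b\}}\, e^{-(\beta-\log\beta)c_1|t|}\,ds\,dt
\le C'\int_{b/c_1}^{\infty} e^{-(\beta-\log\beta)c_1 t}\,dt,
\]
where the $s$-integration over the interval of length $L+2\alpha_0$ just contributes a constant factor. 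The last integral equals $\frac{1}{c_1(\beta-\log\beta)}\,e^{-(\beta-\log\beta)b}$; plugging in $b=\frac{k\log\beta-c}{\beta}$ gives $e^{-(\beta-\log\beta)b}=e^{-(k\log\beta-c)}\cdot e^{(\log\beta)(k\log\beta-c)/\beta}=\beta^{-k}e^{c}\big(1+o(1)\big)$, and the prefactor $\frac{1}{c_1(\beta-\log\beta)}$ contributes another factor $\cO(\beta^{-1})$, so the whole strip integral is $\cO(\beta^{-k-1})$. Combining with the exterior estimate finishes the proof.

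The main obstacle — though a mild one — is verifying the uniform linear lower bound $d(\Phi(s,t),\gamma)\ge c_1|t|$ on all of the curvilinear strip, including the two end-caps: Lemma~\ref{lemp2} handles the "interior" part with constant $1$, but for $s<0$ and $s>L$ one only has the asymptotic statement of Lemma~\ref{lemp4}, so one must argue on a fixed (small) neighbourhood of each endpoint that the $\cO(s^2+t^2)$ correction does not destroy the bound, and then use the compactness/positivity argument of Lemma~\ref{lemp1} (or Corollary~\ref{corol1}) to get a uniform positive constant on the closure of the end-caps away from the endpoints. Since everything takes place on the fixed compact set $\overline{\Pi(\alpha_0)}$, independent of $\beta$, the constants produced are $\beta$-independent, which is exactly what the statement requires.
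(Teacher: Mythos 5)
There is a genuine gap at the end-caps. Your key reduction
\[
\iint_{\Pi(\alpha_0)\setminus W(b)} e^{-(\beta-\log\beta)d(x,\gamma)}\,dx
\le C\iint_{P(\alpha_0)} \mathbf{1}_{\{c_1|t|\ge b\}}\, e^{-(\beta-\log\beta)c_1|t|}\,ds\,dt
\]
is not a valid inequality: a point $x=\Phi(s,t)$ with $s\in(-\alpha_0,0)$ can lie outside $W(b)$ while $|t|$ is arbitrarily small --- for instance $\Phi(-2b,0)$ has $t=0$ but $d(x,\gamma)\approx 2b>b$ by Lemma \ref{lemp4}. So $x\notin W(b)$ does \emph{not} imply $c_1|t|\ge b$ on the end-caps, and your right-hand side simply omits the sector of each end-cap where $|t|<b/c_1$ but $\sqrt{s^2+t^2}$ is of order $b$ or larger. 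The interior strip $s\in[0,L]$ is fine (there $d=|t|$ exactly by Lemma \ref{lemp2}, and your one-dimensional integral coincides with the paper's), and the exterior estimate is fine; only the end-caps are mishandled.

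Fixing this is not cosmetic. On an end-cap the estimate is genuinely two-dimensional: with $p\sqrt{s^2+t^2}\le d(\Phi(s,t),\gamma)\le p^{-1}\sqrt{s^2+t^2}$, the set $\{d\ge b\}$ is contained in $\{\sqrt{s^2+t^2}\ge pb\}$, and in polar coordinates one gets
\[
\int_{pb}^{\,\cdot}\, r\,e^{-p(\beta-\log\beta)r}\,dr=\cO\Big(\dfrac{\log\beta}{\beta^{\,p^{2}k+2}}\Big),
\]
which is $\cO(\beta^{-k-1})$ only when $p^{2}k\ge k-1$, i.e.\ $p>\sqrt{(k-1)/k}$. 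A generic constant $c_1$ in the lower bound would yield only $\cO(\beta^{-c_1^{2}k-2}\log\beta)$, which fails for large $k$. This is exactly why the paper fixes $p>\sqrt{(k-1)/k}$ at the outset and treats the end-caps as half-discs rather than reducing to an integral in the transverse variable alone. Your remark that the constant can be pushed toward $1$ by shrinking $\alpha_0$ (since the leading term in Lemma \ref{lemp4} is exactly $\sqrt{s^2+t^2}$) is the right instinct, but you must then actually perform the radial integral over the half-disc with that near-optimal constant; the one-dimensional reduction in $t$ cannot be repaired.
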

\begin{proof}
During the demonstration we denote by $C_j$ various fixed positive numbers. Pick $p\in (0,1)$ with 
$p> \sqrt{\dfrac{k-1}{k}}$. Then by Lemmata \ref{lemp3} and \ref{lemp4}
one can find $\alpha>0$ such that
\begin{itemize}
\item $d\big(\Phi(s,t),\gamma\big)=|t|$ holds for all $s\in(0,L)$ and $t\in(-\alpha,\alpha)$,
\item $p\sqrt{s^2+t^2}\le d\big(\Phi(s,t),\gamma\big)\le p^{-1} \sqrt{s^2+t^2}$
holds for all $s\in(-\alpha,0)$ and $t\in(-\alpha,\alpha)$, and similarly,
\item $p \sqrt{(s-L)^2+t^2}\le d\big(\Phi(s,t),\gamma\big)\le p^{-1} \sqrt{(s-L)^2+t^2}$
holds for all $s\in(L,L+\alpha)$ and $t\in(-\alpha,\alpha)$.
\end{itemize}
One can represent the integration domain as follows:
\begin{align*}
\RR^2\setminus W\Big(\frac{k \log\beta-c}{\beta}\Big)&=
\Big[
W(\alpha)\setminus W\Big(\frac{k \log\beta-c}{\beta}\Big)
\Big]\\
&\quad
\cup
\Big[
W(2L)\setminus W(\alpha)
\Big]\cup
\big[\RR^2\setminus W(2L)\big].
\end{align*}
Let us estimate the contribution to the integral from each of these three components. Using the diffeomorphism $\Phi$ one easily reduces the integration on $W(\alpha)\setminus W\big(\frac{k \log\beta}{\beta}\big)$ to the integration on two rectangles and two half-discs: this yields the estimate
\begin{multline*}
\iint_{W(\alpha)\setminus W\big(\frac{k \log\beta-c}{\beta}\big)}e^{-(\beta-\log\beta)d(x,\gamma)}\,dx\\
\le C_1 \iint_{p\frac{k \log\beta-c}{\beta}\le|x|\le p^{-1}\alpha} e^{-p(\beta-\log\beta)|x|}\,dx
+C_2 \int_0^L \int_{\frac{k \log\beta-c}{\beta}}^{\alpha} e^{-(\beta-\log\beta)t}  \,dt \,ds\\
\le
C_3 \int_{p\frac{k \log\beta-c}{\beta}}^{p^{-1}\alpha} r e^{-p(\beta-\log\beta)r}\,dr
+
C_4 \int_{\frac{k \log\beta-c}{\beta}}^{\alpha} e^{-(\beta-\log\beta)t}  \,dt,
\end{multline*}
and the direct computation of the two integrals on the right-hand side gives
\[
\iint_{W(\alpha)\setminus W\big(\frac{k \log\beta-c}{\beta}\big)}
e^{-(\beta-\log\beta)d(x,\gamma)}\,dx=\cO\Big(\dfrac{1}{\beta^{k+1}}\Big).
\]

Furthermore, the measure of the second component, $W(2L)\setminus W(\alpha)$, certainly does not exceed $9\pi L^2$, while for all $x$ in this domain the integrated function is majorized by $e^{-(\beta-\log\beta) d(x,\gamma)}\le \beta^{\alpha} e^{-\alpha\beta}$, which gives
\[
\iint_{W(2L)\setminus W(\alpha)}
e^{-(\beta-\log\beta)d(x,\gamma)}dx
\le 9\pi L^2 \beta^{\alpha} e^{-\alpha\beta}=\cO(e^{-\alpha\beta/2}).
\]

Finally, to estimate the integral over the the complement of  $W(2L)$ let us pick a point $x_0\in\gamma$, then
for any $x\notin W(2L)$ one has
\[
d(x,\gamma)\ge |x-x_0|-L\ge |x-x_0|-\dfrac{|x-x_0|}{2}=\dfrac{|x-x_0|}{2}.
\]
Hence we have
\[
\iint_{\RR^2\setminus W(2L)} e^{-(\beta-\ln\beta)d(x,\gamma)}\,dx\le
\iint_{|x-x_0|>2L} e^{-(\beta-\ln\beta)|x-x_0|/2}\,dx
=2\pi \int_{2L}^{\infty} r e^{-(\beta-\ln\beta)r/2}\,dr
=\cO(e^{-L\beta/2}),
\]
and summing up the three terms one obtains the sought result.
\end{proof}

\section{Eigenfunctions estimates}\label{sec-ef}

To make some of the subsequent expressions more readable
we will use the following simplified form of the estimate \eqref{eq-brak}.
\begin{lemma}
            \label{lem1}
For any fixed $j\in\NN$ one has, as  $\beta\to+\infty$,

\[
 \dfrac{\beta -\ln\beta}{2}\le \sqrt{-E_j(\beta)}\le \dfrac{\beta +\ln\beta}{2}.
\]
\end{lemma}

Now let $u_{j,\beta}$ denote an $L^2$-normalized eigenfunction of $H_\beta$ corresponding to the eigenvalue $E_j(\beta)$, $j\in\NN$. 
Using an abstract analog of the boundary integral method, 
see e.g. Corollary~2.3 in~\cite{BEKS}, one can represent it as
\begin{equation}
               \label{eq-ffy}
u_{j,\beta}(x)=\int_\gamma G_0(x,y; E) F_{j,\beta}(y)\,dS_y,
\end{equation}
where $F_{j,\beta}\in L^2(\gamma)$ is an appropriate solution to the integral equation
\begin{equation}
               \label{eq-inteq}
\int_\gamma G_0\big(x,y; E_j(\beta)\big) F_{j,\beta}(y)dS_y = \dfrac{1}{\beta}\, F_{j,\beta}(x),\quad x\in\gamma,
\end{equation}
and $G_0$ is the Green function of the two-dimensional free Laplacian given explicitly by
\[
G_0(x,y;z)=\dfrac{1}{2\pi}K_0(\sqrt{-z}|x-y|).
\]
here and in the following $K_\nu$ denotes the modified Bessel function of order $\nu$, see \cite[Section 9.6]{AS};
alternatively, the integral representation can be obtained from the corresponding Krein's formula~\cite{pos}.
The following estimate will be of crucial importance for our result.
\begin{lemma}
      \label{lem2} $\|F_{j,\beta}\|_{L^2(\gamma)}=\cO(\beta^2\sqrt{\log\beta})$ holds as $\beta\to+\infty$.
\end{lemma}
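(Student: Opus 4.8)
The plan is to reduce everything to a bound on the trace of the eigenfunction on $\gamma$. Evaluating the representation \eqref{eq-ffy} at a point $x\in\gamma$ and inserting the integral equation \eqref{eq-inteq} gives
\[
u_{j,\beta}(x)=\frac1\beta\,F_{j,\beta}(x),\qquad x\in\gamma ,
\]
that is, $F_{j,\beta}=\beta\,\tau_\gamma u_{j,\beta}$, where $\tau_\gamma$ is the trace operator $H^1(\RR^2)\to L^2(\gamma)$ (well defined since $\gamma$ is a bounded $C^4$ arc and $u_{j,\beta}\in H^1(\RR^2)$, and it is precisely the restriction entering the definition of $h_\beta$). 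Hence $\|F_{j,\beta}\|_{L^2(\gamma)}=\beta\,\|u_{j,\beta}\|_{L^2(\gamma)}$, and it suffices to show $\|u_{j,\beta}\|_{L^2(\gamma)}=\cO(\sqrt\beta)$.

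For this I would use two ingredients. The first is the energy identity obtained by testing the eigenvalue equation against $u_{j,\beta}$ itself: since $h_\beta(u_{j,\beta},u_{j,\beta})=E_j(\beta)\|u_{j,\beta}\|^2_{L^2(\RR^2)}=E_j(\beta)$, we get
\[
\|\nabla u_{j,\beta}\|^2_{L^2(\RR^2)}=E_j(\beta)+\beta\,\|u_{j,\beta}\|^2_{L^2(\gamma)}\le\beta\,\|u_{j,\beta}\|^2_{L^2(\gamma)} ,
\]
where the inequality only uses $E_j(\beta)<0$. The second is a scaled trace inequality: there is $C>0$ such that $\|v\|^2_{L^2(\gamma)}\le\varepsilon\|\nabla v\|^2_{L^2(\RR^2)}+C\varepsilon^{-1}\|v\|^2_{L^2(\RR^2)}$ for all $v\in H^1(\RR^2)$ and all small $\varepsilon>0$. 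This is standard: either extend $\gamma$ to a $C^4$ loop $\Tilde\gamma\supset\gamma$ as in Lemma~\ref{lem1} and reduce to the closed-curve case, or work in the curvilinear coordinates $(s,t)=\Phi^{-1}$ on the tube $\Pi(a_0)$, apply the elementary one-dimensional estimate $|w(0)|^2\le(2\delta)^{-1}\|w\|^2_{L^2(-\delta,\delta)}+\varepsilon\|w'\|^2_{L^2(-\delta,\delta)}+\varepsilon^{-1}\|w\|^2_{L^2(-\delta,\delta)}$ on each transverse fiber, and integrate in $s\in(0,L)$, the Jacobian of $\Phi$ being bounded above and below.

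Combining the two ingredients with $v=u_{j,\beta}$, $\varepsilon=\tfrac1{2\beta}$ and $\|u_{j,\beta}\|_{L^2(\RR^2)}=1$ yields
\[
\|u_{j,\beta}\|^2_{L^2(\gamma)}\le\frac1{2\beta}\,\|\nabla u_{j,\beta}\|^2_{L^2(\RR^2)}+2C\beta\le\frac12\,\|u_{j,\beta}\|^2_{L^2(\gamma)}+2C\beta ,
\]
so $\|u_{j,\beta}\|^2_{L^2(\gamma)}\le 4C\beta$ for $\beta$ large, and therefore $\|F_{j,\beta}\|_{L^2(\gamma)}\le 2\sqrt C\,\beta^{3/2}$. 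This is in fact stronger than the asserted bound, since $\beta^{3/2}=o\big(\beta^2\sqrt{\log\beta}\big)$; it is also of the expected order, as the transverse profile of $u_{j,\beta}$ behaves like $\sqrt{\eta}\,e^{-\eta|t|}$ with $\eta=\sqrt{-E_j(\beta)}\sim\beta/2$.

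I do not expect a serious obstacle along this route; the only point needing a word of care is the behaviour of the scaled trace inequality near the endpoints of $\gamma$, which is most cleanly disposed of by passing to the enclosing $C^4$ loop. An alternative, staying closer to the integral-equation machinery set up above, would be to use $1=\|u_{j,\beta}\|^2_{L^2(\RR^2)}=\langle F_{j,\beta},\mathcal{R}_\beta F_{j,\beta}\rangle_{L^2(\gamma)}$, where $\mathcal{R}_\beta$ is the operator on $L^2(\gamma)$ with kernel $\tfrac1{4\pi\eta}|y-y'|\,K_1(\eta|y-y'|)$ arising from $(-\Delta+\eta^2)^{-2}$, and to bound this quadratic form from below; but that forces one to compare $\mathcal{R}_\beta$ with a power of the Birman--Schwinger operator $\cG$ on the curved arc $\gamma$, which is more cumbersome than the variational argument above.
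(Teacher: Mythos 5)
Your proof is correct, and it takes a genuinely different route from the one in the paper. Both arguments share the starting point \eqref{eq-uga}, i.e.\ $F_{j,\beta}=\beta\,u_{j,\beta}\big|_\gamma$, but from there the paper proceeds analytically: it bounds the normal derivative $n(s)\cdot\nabla u_{j,\beta}(\Phi(s,t))$ through the integral representation \eqref{eq-ffy} and the Bessel-kernel expansion \eqref{eq-ktt}, obtaining \eqref{eq-nabu} with $\|F_{j,\beta}\|_{L^2(\gamma)}$ on the right-hand side; it then compares $u_{j,\beta}$ on the thin tube $\Omega(\delta)$, $\delta=1/(\beta^2\log\beta)$, with its transversally constant extension $v_{j,\beta}$, and converts the normalization $\|u_{j,\beta}\|_{L^2(\RR^2)}=1$ into the upper bound on $\|F_{j,\beta}\|_{L^2(\gamma)}$. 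Your argument replaces all of this by the energy identity $\|\nabla u_{j,\beta}\|^2-\beta\|u_{j,\beta}\|^2_{L^2(\gamma)}=E_j(\beta)<0$ together with the scaled trace inequality, which is legitimate: the fiber-wise derivation you sketch gives the inequality with a constant uniform in small $\varepsilon$, the endpoints cause no difficulty because the transverse fibers over $s\in(0,L)$ stay inside $\Pi(a_0)$ where $\Phi$ is a diffeomorphism with Jacobian bounded above and below, and this trace inequality is in any case exactly the ingredient that makes the form $h_\beta$ well defined and closed on $H^1(\RR^2)$ in the framework of \cite{BEKS}. What your approach buys is twofold: it avoids the Bessel-function asymptotics and the singular-integral estimate leading to \eqref{eq-nabu} entirely (these are still needed later, in Lemma \ref{lem5}, but not here), and it yields the sharper bound $\|F_{j,\beta}\|_{L^2(\gamma)}=\cO(\beta^{3/2})$, which, as your heuristic with the transverse profile $\sqrt{\eta}\,e^{-\eta|t|}$ shows, is of the optimal order; the paper's $\cO(\beta^2\sqrt{\log\beta})$ loses a factor but is amply sufficient for the subsequent pointwise estimates, whose eventual error $\cO(\beta^{-1}\log\beta)$ is dominated by the cut-off terms in Lemma \ref{lempi} anyway.
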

\begin{proof}
Throughout the proof again $C_j$ will denote various positive constants. To avoid using cumbersome notation we identify the function $F_{j,\beta}(\cdot)$ with $F_{j,\beta}\big (\Phi(\cdot,0)\big)\equiv F_{j,\beta}\big(\Gamma(\cdot)\big)$
and write simply $E$ instead of $E_j(\beta)$.

We will employ the following well-know relation \cite[Eqs. 9.7.2 and 9.6.27]{AS}:
\begin{gather}
   \label{eq-kas}
K_\nu(w)=\sqrt{\dfrac{\pi}{2w}} e^{-w} \Big(1+o(1)\Big), \quad w\to +\infty, \quad \nu=0,1,\\
     \label{eq-k01}
K_0'=-K_1.
\end{gather}

According to \eqref{eq-ffy} and \eqref{eq-inteq}, one has
\begin{equation}
        \label{eq-uga}
u_{j,\beta}\big|_\gamma = \dfrac{1}{\beta}\,F_{j,\beta},
\end{equation}
and moreover, using \eqref{eq-ffy} and \eqref{eq-k01} we can write
\[
\nabla u_{j,\beta}(x)= \dfrac{1}{2\pi}\int_\gamma \dfrac{\sqrt{-E}(y-x)}{|x-y|} K_1\big(\sqrt{-E}|x-y|\big)\, F_{j,\beta}(y)\,dS_y.
\]
Another property to use \cite[Eqs. 9.6.10 and 9.6.11]{AS} is the representation
\begin{equation}
       \label{eq-ktt}
K_1(t)=\dfrac{1}{t} +M(t),
\quad M(t)= t g_1(t) \log t +g_2(t),
\end{equation}
where $g_1$ and $g_2$ are analytic functions. It yields
\[
\nabla u_{j,\beta}(x)= \dfrac{1}{2\pi}\int_\gamma \dfrac{(y-x)}{|x-y|^2} F_{j,\beta}(y) \,dS_y
+\dfrac{1}{2\pi}\int_\gamma \dfrac{\sqrt{-E}(y-x)}{|x-y|} M\big(\sqrt{-E}|y-x|\big)F_{j,\beta}(y) \,dS_y.
\]
Let us estimate the expression $n(s)\cdot\nabla u_{j,\beta}\big(\Phi(s,t)\big)$. In view of the representation  \eqref{eq-ktt} and the asymptotics \eqref{eq-kas}
we have a uniform bound $\big|M(w)\big|\le 2\pi C_1$ for all $w>0$, and therefore
\begin{multline}
           \label{eq-ue1}
\Big|n(s)\cdot\nabla u_{j,\beta}\big(\Phi(s,t)\big)\Big|
 \le
\Big|
\dfrac{1}{2\pi}
\int_0^L \dfrac{n(s)\cdot\big(\Phi(\sigma,0) -\Phi(s,t)\big)}{|\Phi(\sigma,0) -\Phi(s,t)|^2} F_{j,\beta}(\sigma) \,d\sigma
\Big|
+
C_1 \sqrt{-E}\|F_{j,\beta}\|_{L^1(\gamma)}.
\end{multline}
Furthermore, for large enough $\beta$ the inequalities \ref{eq-lipf} imply the estimate
\[
\dfrac{1}{2\pi|\Phi(\sigma,0) -\Phi(s,t)|^2}\le \dfrac{C_4}{(s-\sigma)^2 + t^2}.
\]
for  all $s,\sigma\in(0,L)$ and $t\in(-a,a)$, recall the assumption \eqref{abeta}. Next note that $\Phi(\sigma,0) -\Phi(s,t)=-t n(s) +\Gamma(\sigma)-\Gamma(s)$, hence
using \eqref{eq-fg} we get
\[
n(s)\cdot\big(\Phi(\sigma,0) -\Phi(s,t)\big)=- t+ (\sigma-s)^2 \rho(\sigma,s),
\]
where $\rho(\sigma,s)=n(s)\cdot \rho_1(\sigma,s)$ is uniformly bounded on $[0,L]\times[0,L]$. Consequently, there are $C_5, C_6>0$ such that
\[
\Big|
\dfrac{n(s)\cdot\big(\Phi(\sigma,0) -\Phi(s,t)\big)}{|\Phi(\sigma,0) -\Phi(s,t)|^2}\Big|
\le C_5 \dfrac{|t|+C_6(s-\sigma)^2}{(s-\sigma)^2 + t^2} \le
C_5 \dfrac{|t|}{(s-\sigma)^2 + t^2} + C_5C_6
\]
and
\begin{multline*}
\bigg|
\dfrac{1}{2\pi}
\int_0^L \dfrac{n(s)\cdot\big(\Phi(\sigma,0) -\Phi(s,t)\big)}{|\Phi(\sigma,0) -\Phi(s,t)|^2} F_{j,\beta}(\sigma) \,d\sigma
\bigg|
\\
\le
C_5 \int_0^L
 \dfrac{|t|}{(s-\sigma)^2 + t^2}
| F_{j,\beta}(\sigma)|d\sigma
+C_5 C_6 \|F_{j,\beta}\|_{L^1(\gamma)}.
\end{multline*}
Using the Cauchy-Schwarz inequality we obtain
\begin{multline}
\int_0^L
 \dfrac{|t|}{(s-\sigma)^2 + t^2}
| F_{j,\beta}(\sigma)|\,d\sigma
\le
\bigg(
\int_0^L
 \dfrac{|t|^2}{\big((s-\sigma)^2 + t^2\big)^2}
\,d\sigma
\bigg)^{1/2}
\|F_{j,\beta}\|_{L^2(\gamma)}
\\
\le
\bigg(
\int_\RR
 \dfrac{|t|^2}{\big((s-\sigma)^2 + t^2\big)^2}
\,d\sigma
\bigg)^{1/2}
\|F_{j,\beta}\|_{L^2(\gamma)}\\
=
|t|\bigg(
\int_\RR
 \dfrac{d\sigma}{(\sigma^2 + t^2)^2}
\bigg)^{1/2}
\|F_{j,\beta}\|_{L^2(\gamma)}\\
=
|t|^{-1/2}\bigg(\int_\RR
 \dfrac{d\xi}{(\xi^2 + 1)^2}
\bigg)^{1/2}
\|F_{j,\beta}\|_{L^2(\gamma)}=C_7 |t|^{-1/2} \|F_{j,\beta}\|_{L^2(\gamma)}.
\end{multline}
Putting everything together and using a rough estimate $E=\cO(\beta)$ from Lemma \ref{lem1}, we get the bound
\begin{equation}
         \label{eq-nabu}
\Big|n(s)\cdot\nabla u_{j,\beta}\big(\Phi(s,t)\big)\Big|
\le C_8\big(|t|^{-1/2} +\beta\big)\|F_{j,\beta}\|_{L^2(\gamma)}
\end{equation}
with some constant $C_8>0$. Next we denote
$\delta:=(\beta^2 \log\beta)^{-1}$,
and for $\beta$ large enough we construct a new function $v$ on $\Omega(\delta)$ by
\[
v_{j,\beta}\big(\Phi(s,t)\big):=u_{j,\beta}\big(\Phi(s,0)\big) ,\quad
(s,t)\in(0,L)\times(-\delta,\delta),
\]
for which the triangle inequality yields
\begin{equation}
        \label{eq-minu}
\|u\|_{L^2(\Omega(\delta))}\ge \|v_{j,\beta}\|_{L^2(\Omega(\delta))}-
\|u_{j,\beta}-v_{j,\beta}\|_{L^2(\Omega(\delta))}.
\end{equation}
Using \eqref{eq-uga}, one can write now the following estimates:
\begin{multline}
         \label{eq-vvj}
\|v_{j,\beta}\|^2_{L^2(\Omega(\delta))} \ge C_9
\int_{-\delta}^\delta \int_0^L \big|u_{j,\beta}\big(\Phi(s,0)\big)\big|^2 \,ds\, dt\\
= \dfrac{C_9}{\beta^2} \int_{-\delta}^\delta \int_0^L \big|F_{j,\beta}(s)\big|^2 \,ds\, dt
= \dfrac{C^2_{10} \delta}{\beta^2}  \|F_{j,\beta}\|^2_{L^2(\gamma)}
=\dfrac{C^2_{10}}{\beta^4\log\beta} \|F_{j,\beta}\|^2_{L^2(\gamma)}.
\end{multline}
On the other hand, the second term on the right-hand side of \eqref{eq-minu} satisfies
\begin{equation}
          \label{eq-uvu}
\|u_{j,\beta}-v_{j,\beta}\|^2_{L^2(\Omega(\delta))} \le C_{11}
 \int_0^L \int_{-\delta}^\delta \Big|u_{j,\beta}\big(\Phi(s,t)\big)-u_{j,\beta}\big(\Phi(s,0)\big)\Big|^2 \,dt\, ds.
\end{equation}
To estimate the integrated function, we employ the relation
\[
\dfrac{d}{dt} u_{j,\beta}\big(\Phi(s,t)\big) = n(s)\cdot \nabla u_{j,\beta}\big(\Phi(s,t)\big),
\]
which yields, through \eqref{eq-nabu}, the bound
\begin{multline*}
\Big|u_{j,\beta}\big(\Phi(s,t)\big)-u_{j,\beta}\big(\Phi(s,0)\big)\Big|
\\=\Big|
\int_0^t n(s)\cdot\nabla u_{j,\beta}\big(\Phi(s,\xi)\big)\,d\xi
\Big|
\le
\int_0^{|t|}
\Big|
n(s)\cdot\nabla u_{j,\beta}\big(\Phi(s,\xi)\big)
\Big|
\,d\xi
\\
\le
C_8 \int_0^{|t|} \big(|\xi|^{-1/2} +\beta\big) \,d\xi\
\|F_{j,\beta}\|_{L^2(\gamma)}
=
C_8 \big(2|t|^{1/2}+ |t|\beta\big)\cdot
\|F_{j,\beta}\|_{L^2(\gamma)},
\end{multline*}
and consequently,
\begin{multline}
     \label{eq-umv}
\|u_{j,\beta}-v_{j,\beta}\|^2_{L^2\big(\Omega(\delta)\big)} \le 8C_8 C_{11}
 \int_0^L \int_{-\delta}^\delta \Big( |t| + \beta^2 t^2\Big) \,dt\, ds\, \|F_{j,\beta}\|^2_{L^2(\gamma)}\\
\le  C_{12}(\delta^2 + \delta^3\beta^2)\ \|F_{j,\beta}\|^2_{L^2(\gamma)} \le \dfrac{C_{13}^2}{\beta^4 \log^2\beta}\|F_{j,\beta}\|^2_{L^2(\gamma)}.
\end{multline}
Substituting finally \eqref{eq-vvj} and \eqref{eq-umv} into \eqref{eq-minu} we obtain
\begin{multline*}
1= \|u_{j,\beta}\|_{L^2(\RR^2)}\ge \|u_{j,\beta}\|_{L^2\big(\Omega(\delta)\big)}\\
\ge
\Big(
\dfrac{C_{10}}{\beta^2\sqrt{\log\beta}}
-\dfrac{C_{13}}{\beta^2 \log\beta}
\Big)\|F_{j,\beta}\|_{L^2(\gamma)}
\ge
\dfrac{C_{14}}{\beta^2\sqrt{\log\beta}}\|F_{j,\beta}\|_{L^2(\gamma)},
\end{multline*}
which gives the sought result.
\end{proof}
We remark that a more involved choice of the constant $\delta$ during the proof would result in a better norm majoration
which would enter all the subsequent estimates. As already explained after Proposition~\ref{prop2}, we did not do this
as such an improvement would not help us to improve the resulting eigenvalue estimate
in Theorem~\ref{thm1}.
\begin{lemma}
      \label{lem5}
For any $k,c>0$ one can find a $D>0$ such that
\begin{gather}
       \label{eq-u1}
\big|u_{j,\beta}(x)\big|\le D\beta^2  \exp\Big(-\dfrac{(\beta-\log\beta)d(x,\gamma)}{2}\Big),\\
       \label{eq-u2}
\big|\nabla u_{j,\beta}(x)\big|\le D\beta^3 \exp\Big(-\dfrac{(\beta-\log\beta)d(x,\gamma)}{2}\Big)
\end{gather}
holds whenever $x\notin W\Big(\dfrac{k\log\beta-c}{\beta}\Big)$.
\end{lemma}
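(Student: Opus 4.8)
The plan is to insert the integral representation~\eqref{eq-ffy} of $u_{j,\beta}$, together with the representation of $\nabla u_{j,\beta}$ derived from~\eqref{eq-ffy} and~\eqref{eq-k01} in the proof of Lemma~\ref{lem2}, into elementary pointwise estimates. Three facts are already at our disposal: the two-sided bound $\tfrac12(\beta-\log\beta)\le\sqrt{-E_j(\beta)}\le\tfrac12(\beta+\log\beta)$ from Lemma~\ref{lem1}; the norm estimate $\|F_{j,\beta}\|_{L^2(\gamma)}=\cO(\beta^2\sqrt{\log\beta})$ from Lemma~\ref{lem2}, which via the Cauchy--Schwarz inequality and the finiteness of $L$ also gives $\|F_{j,\beta}\|_{L^1(\gamma)}=\cO(\beta^2\sqrt{\log\beta})$; and the large-argument asymptotics~\eqref{eq-kas} of $K_0$ and $K_1$.

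First I would record the key consequence of the location of $x$: if $x\notin W\big(\tfrac{k\log\beta-c}{\beta}\big)$ then for every $y\in\gamma$ one has $|x-y|\ge d(x,\gamma)\ge\tfrac{k\log\beta-c}{\beta}$, so, with $E=E_j(\beta)$, the quantity $w:=\sqrt{-E}\,|x-y|$ satisfies
\[
w\ \ge\ \sqrt{-E}\,d(x,\gamma)\ \ge\ \frac{\beta-\log\beta}{2}\,d(x,\gamma)\ \ge\ \frac{(\beta-\log\beta)(k\log\beta-c)}{2\beta},
\]
and the last expression tends to $+\infty$ with $\beta$; in particular $w\ge 1$ for $\beta$ large. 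From~\eqref{eq-kas} one has $K_\nu(w)\le C\,w^{-1/2}e^{-w}$ for $w\ge 1$ and $\nu=0,1$ with an absolute constant $C$, and on the range above $w^{-1/2}=\cO\big((\log\beta)^{-1/2}\big)$ while $e^{-w}\le\exp\!\big(-\tfrac12(\beta-\log\beta)\,d(x,\gamma)\big)$. Consequently, uniformly in $y\in\gamma$ and in such $x$,
\[
K_0\big(\sqrt{-E}\,|x-y|\big)=\cO\big((\log\beta)^{-1/2}\big)\,\exp\!\big(-\tfrac12(\beta-\log\beta)\,d(x,\gamma)\big),
\]
and likewise $\sqrt{-E}\,K_1\big(\sqrt{-E}\,|x-y|\big)=\cO\big(\beta(\log\beta)^{-1/2}\big)\exp\!\big(-\tfrac12(\beta-\log\beta)\,d(x,\gamma)\big)$, where in the second estimate we used $\sqrt{-E}=\cO(\beta)$.

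Both bounds of the lemma then follow by the same one-line computation. For~\eqref{eq-u1} one estimates $|u_{j,\beta}(x)|\le\tfrac1{2\pi}\int_\gamma K_0\big(\sqrt{-E}\,|x-y|\big)\,|F_{j,\beta}(y)|\,dS_y$, pulls the exponential and the $(\log\beta)^{-1/2}$ factor out of the integral, and is left with $\cO\big((\log\beta)^{-1/2}\big)\,\|F_{j,\beta}\|_{L^1(\gamma)}=\cO(\beta^2)$ by Lemma~\ref{lem2}. For~\eqref{eq-u2} the gradient representation carries in addition the unit vector $(y-x)/|x-y|$ and has $\sqrt{-E}\,K_1$ in place of $K_0$, which by the previous paragraph supplies one extra power of $\beta$ and yields $\cO(\beta^3)$; differentiation under the integral sign is legitimate since $x$ stays at a positive distance from $\gamma$. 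The one point that is more than bookkeeping is exactly this cancellation: the $(\log\beta)^{-1/2}$ gained from the decay of the Bessel functions at the logarithmically large argument $\sqrt{-E}\,d(x,\gamma)$ precisely absorbs the $\sqrt{\log\beta}$ lost in Lemma~\ref{lem2}; both the tube radius $\sim\log\beta/\beta$ and the auxiliary width $\delta=(\beta^2\log\beta)^{-1}$ appearing in the proof of Lemma~\ref{lem2} are calibrated so that this happens, and I do not anticipate any genuine obstacle beyond tracking these powers.
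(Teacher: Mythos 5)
Your argument is correct and coincides with the paper's own proof: both insert the representation \eqref{eq-ffy} (and its gradient with $K_1$), bound $\sqrt{-E_j(\beta)}\,|x-y|\ge\frac{k\log\beta}{2}+\cO(1)$ via Lemma~\ref{lem1}, and use the $w^{-1/2}e^{-w}$ decay of \eqref{eq-kas} so that the resulting $(\log\beta)^{-1/2}$ cancels the $\sqrt{\log\beta}$ in $\|F_{j,\beta}\|_{L^1(\gamma)}$ from Lemma~\ref{lem2}. No discrepancies to report.
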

\begin{proof}
Recall that we have the integral representation \eqref{eq-ffy} for the eignefunction $u_{j,\beta}$, hence using Lemma \ref{lem2} and Cauchy-Schwarz inequality we infer that
\begin{multline}
         \label{eq-ukk}
|u_{j,\beta}(x)|\le \sup_{y\in\gamma }\Big|K_0\big(\sqrt{-E_j(\beta)}|x-y|\big)\Big|\cdot \|F_{j,\beta}\|_{L^1(\gamma)}\\
\le C_1\beta^2\sqrt{\log\beta}\, \sup_{y\in\gamma}\Big|K_0\big(\sqrt{-E_j(\beta)}|x-y|\big)\Big|.
\end{multline}
For $x\notin W\Big(\dfrac{k\log\beta-c}{\beta}\Big)$ and  $y\in\gamma$ we have, using Lemma \ref{lem1},
\[
\sqrt{-E_j(\beta)}|x-y|\ge \sqrt{-E_j(\beta)}\,d(x,\gamma)
\ge \dfrac{\beta-\log\beta}{2}\,
\dfrac{k\log \beta-c}{\beta}= \dfrac{k\log \beta}{2}+\cO(1)
\]
as $\beta\to +\infty$. For fixed $x,y$ the asymptotics \eqref{eq-kas} and Lemma~\ref{lem1} imply
\begin{multline*}
\Big|K_0\big(\sqrt{-E_j(\beta)}|x-y|\big)\Big|
\le C_3 \Big(\sqrt{-E_j(\beta)}|x-y|\Big)^{-1/2}
\exp\Big(-\dfrac{(\beta-\log\beta)|x-y|}{2}\Big)\\
\le C_3
\Big(\sqrt{-E_j(\beta)}d(x,\gamma)\Big)^{-1/2}
\exp\Big(-\dfrac{(\beta-\log\beta)d(x,\gamma)}{2}\Big)\\
\le \dfrac{C_4}{\sqrt{\log\beta}}\exp\Big(-\dfrac{(\beta-\log\beta)d(x,\gamma)}{2}\Big).
\end{multline*}
Combining this inequality with \eqref{eq-ukk} we obtain the bound \eqref{eq-u1}. To estimate $\nabla u_{j,\beta}$ we use \eqref{eq-k01} and write
\[
\nabla u_{j,\beta}(x)=
-\sqrt{-E_j(\beta)}\int_\gamma \nabla_x|x-y| K_1 \Big(\sqrt{-E_j(\beta)}\,|x-y|\Big) F_{j,\beta}(y)dS_y.
\]
It is now enough note that $\big|\nabla_x|x-y|\big|\le 1$ and that
$E_j(\beta)=\cO(\beta)$ by Lemma~\ref{lem1}, hence estimating the integral again with the help of \eqref{eq-kas}
we arrive at the bound \eqref{eq-u2}.
\end{proof}

\section{Cut-off functions}\label{sec-co}

In this section we introduce a family of cut-off functions that will be used in the following when we will apply the max-min principle in the last step of the argument. 

We choose a  function $C^\infty$ function $\psi:\RR\to [0,+\infty)$
such that
\[
\psi(s)=1 \;\text { for }\; s\ge 0
\;\text{ and }\;
\psi(s)=0 \;\text{ for }\; s\le -1.
\]
Next we consider the function $\rho_a:P(a)\to\RR$,
\[
\rho_a(s,t)=\min\big\{|a-t|,|a+t|, L+a-s, s+a\big\},
\]
in other words, $\rho_a(s,t)$ is the distance between the point $(s,t)\in P(a)$ and the boundary of the rectangle $P(a)$. We use it to introduce the function $R_\beta:\Pi(a)\to \RR$ by
\[
R_\beta(x)=\rho_a\big(\Phi^{(-1)}(x)\big).
\]
where $\Phi^{(-1)}(x)$ means the pre-image of the point $x\in \Pi(a)$ with respect to the map \eqref{curvilinear} and the parameters are related by \eqref{abeta}. Finally, for sufficiently large $\beta$ we introduce the function $g_\beta:\RR^2\to \RR$ by
\begin{equation} \label{mollif}
g_\beta(x)=\begin{cases}
\psi\Bigg( \dfrac{\log R_\beta(x)+\log \beta}{\log\log \beta}
\Bigg), & x\in \Pi(a),\\
0, & \text{otherwise}.
\end{cases}
\end{equation}
Note that $g_\beta$ belongs to $H^1(\RR^2)$ and has a compact support since $g(x)=0$ for all $x\notin\Pi(a)$. In addition, we have $g(x)=0$ for those $x\in\Pi(a)$ that can be represented as $x=\Phi(s,t)$ with $\rho_a(s,t)\le (\beta\log\beta)^{-1}$. On the other hand, $g(x)=1$ holds for $x\in \Pi(a)$ with $R_\beta(x)\ge \beta^{-1}$. In particular,
\begin{align*}
\supp \nabla g_\beta &\subset \Theta(\beta):=\Big\{
\Phi(s,t): (s,t)\in P(a),\quad \dfrac{1}{\beta \log\beta}\le \rho_a(s,t)\le \dfrac{1}{\beta}
\Big\},\\
\supp (1\!-\!g)&\subset V(\beta):=\Big\{\Phi(s,t): (s,t)\in P(a), \quad \rho_a(s,t)\le \dfrac{1}{\beta}\Big\}.
\end{align*}

\begin{lemma}
         \label{lem4}
In the limit $\beta\to+\infty$ one has
\[
\iint_{\Theta(\beta)} \big|\nabla g_\beta(x)\big| \,dx =\cO(1)\text{ and }
\iint_{\Theta(\beta)} \big|\nabla g_\beta(x)\big|^2 \,dx =\cO(\beta\log\beta).
\]
\end{lemma}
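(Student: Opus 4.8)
The plan is to estimate the two integrals over $\Theta(\beta)$ by transferring them to the parameter rectangle $P(a)$ via the diffeomorphism $\Phi$ and then computing the gradient of $g_\beta$ explicitly via the chain rule. First I would recall that on $\Pi(a)$ one has $g_\beta(x)=\psi\big(G(x)\big)$ with $G(x):=\big(\log R_\beta(x)+\log\beta\big)/\log\log\beta$ and $R_\beta=\rho_a\circ\Phi^{(-1)}$; since $\psi\in C^\infty$ with $\psi'$ bounded and supported in $[-1,0]$, we get $|\nabla g_\beta(x)|\le \|\psi'\|_\infty\,|\nabla G(x)|$, and $|\nabla G(x)|=\dfrac{|\nabla R_\beta(x)|}{R_\beta(x)\,\log\log\beta}$. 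The key point is that $\rho_a$ is $1$-Lipschitz on $P(a)$ (being a distance-to-boundary function), so by Lemma~\ref{lemp1} (the bi-Lipschitz bounds on $\Phi$) the composition $R_\beta=\rho_a\circ\Phi^{(-1)}$ is Lipschitz on $\Pi(a)$ with a constant $C_1>0$ depending only on $a_0$; hence $|\nabla R_\beta(x)|\le C_1$ a.e. Therefore, on $\Theta(\beta)$, where by definition $R_\beta(x)\ge \dfrac{1}{\beta\log\beta}$, we obtain the pointwise bound
\[
\big|\nabla g_\beta(x)\big|\le \frac{C_1\,\|\psi'\|_\infty}{R_\beta(x)\,\log\log\beta}\le \frac{C_2\,\beta\log\beta}{\log\log\beta}.
\]

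Next I would estimate the measure of $\Theta(\beta)$. Again by Lemma~\ref{lemp1} the map $\Phi$ has bounded Jacobian, so $\big|\Theta(\beta)\big|\le C_3\,\big|\{(s,t)\in P(a):\, \tfrac{1}{\beta\log\beta}\le\rho_a(s,t)\le\tfrac1\beta\}\big|$. The set $\{\rho_a\le \tfrac1\beta\}$ in $P(a)$ is a thin collar of width $1/\beta$ around the boundary of the rectangle $P(a)$, whose perimeter is $2\big(L+2a\big)+8a=O(1)$; hence its area is $O(1/\beta)$, and a fortiori $\big|\Theta(\beta)\big|\le C_4/\beta$. Combining with the pointwise bound gives
\[
\iint_{\Theta(\beta)}\big|\nabla g_\beta\big|^2\,dx\le \Big(\frac{C_2\beta\log\beta}{\log\log\beta}\Big)^2\cdot\frac{C_4}{\beta}=\cO\!\Big(\frac{\beta\log^2\beta}{(\log\log\beta)^2}\Big),
\]
which is $\cO(\beta\log\beta)$ as claimed (in fact somewhat better), and likewise $\iint_{\Theta(\beta)}|\nabla g_\beta|\,dx\le \dfrac{C_2\beta\log\beta}{\log\log\beta}\cdot\dfrac{C_4}{\beta}=\cO\!\big(\tfrac{\log\beta}{\log\log\beta}\big)=\cO(1)$.

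The one genuine subtlety — the step I expect to be the main obstacle — is justifying that $R_\beta$ is Lipschitz with a $\beta$-independent constant and, more importantly, obtaining the \emph{sharp} lower bound $R_\beta(x)\ge \tfrac{1}{\beta\log\beta}$ on $\Theta(\beta)$ in the right form: one must be careful that the natural scale on $\Theta(\beta)$ is $\rho_a\sim 1/(\beta\log\beta)$, which is much smaller than the tube half-width $a=6\log\beta/\beta$, so the diffeomorphism estimate of Lemma~\ref{lemp1} is being applied at a very fine scale — but since that lemma's constants $C_1,C_2$ are uniform over all of $P(\alpha)$ (indeed over $\overline{P(\alpha)}\subset P(a_0)$) this causes no problem, one only needs $a\le a_0$ which holds for large $\beta$. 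A minor bookkeeping point is that $\rho_a$ is only piecewise smooth (it is a min of four affine functions), but it is globally $1$-Lipschitz, so $\nabla\rho_a$ exists a.e.\ with $|\nabla\rho_a|\le 1$, and that is all that enters; the chain rule then holds almost everywhere. With these points settled, assembling the two displayed estimates completes the proof.
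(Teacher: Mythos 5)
Your setup --- the chain-rule formula for $\nabla g_\beta$, the a.e.\ bound $|\nabla\rho_a|\le 1$, the uniformity of the bi-Lipschitz constants of $\Phi$ at the fine scale, and the resulting pointwise estimate $|\nabla g_\beta(x)|\le C\big(R_\beta(x)\log\log\beta\big)^{-1}$ --- coincides with the paper's argument. The gap is in the final step: estimating the integrals by $\sup_{\Theta(\beta)}|\nabla g_\beta|^{\nu}\cdot|\Theta(\beta)|$ is too crude, and the asymptotic comparisons you then assert are false. Indeed
\[
\dfrac{\log\beta}{\log\log\beta}\neq \cO(1)
\qquad\text{and}\qquad
\dfrac{\beta\log^2\beta}{(\log\log\beta)^2}\neq\cO(\beta\log\beta),
\]
since $\log\beta/(\log\log\beta)^2\to+\infty$; your bounds are \emph{weaker}, not ``somewhat better,'' than the claimed ones. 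The loss comes from the fact that $|\nabla g_\beta|\sim\big(\rho_a\log\log\beta\big)^{-1}$ attains its maximal size $\sim\beta\log\beta/\log\log\beta$ only on the innermost sliver of the collar where $\rho_a\sim 1/(\beta\log\beta)$, a set of measure $\cO\big(1/(\beta\log\beta)\big)$, and not on the full collar of measure $\cO(1/\beta)$.

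The repair is to integrate the pointwise bound rather than take its supremum. Passing to the $(s,t)$ variables and decomposing the collar $\frac{1}{\beta\log\beta}\le\rho_a\le\frac{1}{\beta}$ into rectangles and corner triangles (as the paper does), the problem reduces to a one-dimensional integral in the distance-to-boundary variable:
\[
\iint_{\frac{1}{\beta\log\beta}\le\rho_a(s,t)\le\frac{1}{\beta}}\dfrac{ds\,dt}{\rho_a(s,t)^\nu}
\le C\int_{1/(\beta\log\beta)}^{1/\beta}\dfrac{dt}{t^\nu}
=\begin{cases}
C\log\log\beta, & \nu=1,\\
C(\beta\log\beta-\beta), & \nu=2,
\end{cases}
\]
and dividing by the factor $(\log\log\beta)^\nu$ coming from the pointwise bound gives exactly $\cO(1)$ and $\cO(\beta\log\beta)$. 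This sharper $\nu=2$ bound is not cosmetic: it is what Lemma~\ref{lempi} needs to keep the term $I_2$ at $\cO(\log\beta/\beta)$, whereas your version would degrade the final error in Theorem~\ref{thm1} by a factor $\log\beta/(\log\log\beta)^2$.
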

\begin{proof}
Let $D_{s,t}\Phi$ denote the Jacobian matrix value of the map $\Phi$ at $(s,t)$.
We have
\begin{align*}
\nabla g_\beta\big(\Phi(s,t)\big)&=\psi'\Bigg( \dfrac{\log R_\beta\big(\Phi(s,t)\big)+\log \beta}{\log\log \beta}\Bigg)
\dfrac{\nabla R_\beta \big(\Phi(s,t)\big)}{R_\beta\big(\Phi(s,t)\big)\log\log\beta} \\
&=\psi'\Bigg( \dfrac{\log \rho_a(s,t)+\log \beta}{\log\log \beta}\Bigg) \dfrac{1}{\rho_a(s,t)\log\log\beta} \nabla \rho_a(s,t) (D_{s,t}\Phi)^{-1}.
\end{align*}
We have $\big|\nabla\rho_a(s,t)\big|\le 1$ and $\big\|(D_{s,t}\Phi)^{-1}\big\|\le M$ for some $M>0$
and all $(s,t)\in P(a)$ if $\beta$ is sufficiently large.
Hence it holds
\[
\Big|
\nabla g_\beta\big(\Phi(s,t)\big)
\Big|\le \dfrac{C_1}{\rho_a(s,t) \log\log\beta}.
\]
with some $C_1>0$, and
\begin{equation}
     \label{eq-iii}
\iint_{\Theta(\beta)} \big|\nabla g_\beta(x)\big|^\nu \,dx
\le
\dfrac{C_1^\nu C_2}{(\log\log\beta)^\nu}
\iint_{\substack{ (s,t)\in P(a),\\ \frac{1}{\beta\log\beta}\le\rho_a(s,t)\le\frac{1}{\beta}}} \dfrac{ds\,dt}{\rho_a(s,t)^\nu},
\quad \nu=1,2.
\end{equation}
The integration domain can be decomposed naturally into four rectangles and eight triangles.
Using the obvious symmetries, we can rewrite it as
\begin{multline*}
I_\nu(\beta):=\iint_{\substack{(s,t)\in P(a)\\ \frac{1}{\beta\log\beta}<\rho_a(s,t)<\frac{1}{\beta}}} \dfrac{ds\, dt}{\rho_a(s,t)^\nu}
= 2 \int_{-a+\frac{1}{\beta}}^{L+a-\frac{1}{\beta}} \int_{\frac{1}{\beta\log\beta}}^{\frac{1}{\beta}} \frac{1 }{t^\nu}\, dt \, ds\\
+2 \int_{\frac{1}{\beta\log\beta}}^{\frac{1}{\beta}} \int_{-a+\frac{1}{\beta}}^{a-\frac{1}{\beta}} \dfrac{1}{s^\nu} \, dt \,ds
+8 \int_{\frac{1}{\beta\log\beta}}^{\frac{1}{\beta}} \int_{\frac{1}{\beta\log\beta}}^s
\dfrac{1}{t^\nu}\, dt \,ds\le C_3 \int_{\frac{1}{\beta\log\beta}}^{\frac{1}{\beta}} \frac{dt }{t^\nu}.
\end{multline*}
for $\nu=1,2$ and some $C_3>0$. Hence
\[
I_1(\beta)\le C_3 \int_{\frac{1}{\beta\log\beta}}^{\frac{1}{\beta}} \frac{dt }{t}
=C_3 \log\log\beta
\text{ and }
I_2(\nu)=C_3 \int_{\frac{1}{\beta\log\beta}}^{\frac{1}{\beta}} \frac{dt }{t^2}
=C_3(\beta\log\beta-\beta).
\]
Finally, by \eqref{eq-iii} we infer that
\[
\iint_{\Theta(\beta)} \big|\nabla g_\beta(x)\big| dx
=\dfrac{C_1 C_2  I_1(\beta)}{\log\log\beta}
\le
\dfrac{C_1 C_2 C_3\log\log\beta}{\log\log\beta}=\cO(1)
\]
and
\[
\iint_{\Theta(\beta)} \big|\nabla g_\beta(x)\big|^2 dx
= \dfrac{C_1^2 C_2 I_2(\beta)}{(\log\log\beta)^2}
\le\dfrac{C_1^2 C_2C_3 (\beta\log\beta-\beta)}{(\log\log\beta)^2}=\cO(\beta\log\beta).
\]
holds as $\beta\to +\infty$ which we have set out to prove.
\end{proof}

\begin{lemma}
          \label{lem6}
For sufficiently large $\beta$ there is a constant $D>0$ such that
\begin{gather}
       \label{eq-fest}
\big|u_{j,\beta}(x)\big|\le \dfrac{D}{\beta},\\
       \label{eq-fnab}
\big|\nabla u_{j,\beta}(x)\big|\le D
\end{gather}
holds for all $x\in V(\beta)$.
\end{lemma}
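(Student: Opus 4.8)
The plan is to combine the pointwise bounds of Lemma~\ref{lem5} with the fact that the whole of $V(\beta)$ lies at distance of order $a=\frac{6\log\beta}{\beta}$ from the arc~$\gamma$. The tube half-width $a=\frac{6\log\beta}{\beta}$ is tuned precisely so that the exponential factor $\exp\bigl(-\tfrac{(\beta-\log\beta)d(x,\gamma)}{2}\bigr)$ occurring in \eqref{eq-u1}--\eqref{eq-u2} decays like $\beta^{-3}$ on $V(\beta)$; this absorbs the polynomial prefactors there and leaves $\beta^{2}e^{-(\beta-\log\beta)d(x,\gamma)/2}=\cO(\beta^{-1})$ and $\beta^{3}e^{-(\beta-\log\beta)d(x,\gamma)/2}=\cO(1)$, which are exactly the bounds \eqref{eq-fest} and \eqref{eq-fnab}.

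The geometric step is to show that there is a constant $C>0$, independent of~$\beta$, such that $d(x,\gamma)\ge a-\frac{C}{\beta}$ for every $x\in V(\beta)$ and all large~$\beta$. Write $x=\Phi(s,t)$ with $(s,t)\in P(a)$ and $\rho_a(s,t)\le\frac1\beta$. Since $\rho_a(s,t)$ is the Euclidean distance from $(s,t)$ to $\partial P(a)$, there is $(s',t')\in\partial P(a)$ with $\bigl|(s,t)-(s',t')\bigr|\le\frac1\beta$; as $\overline{P(a)}\subset P(\alpha)$ for a fixed $\alpha\in(0,a_0)$ and all large~$\beta$, Lemma~\ref{lemp1} gives $\bigl|x-\Phi(s',t')\bigr|\le\frac{C'}{\beta}$, while $\Phi(s',t')\in\partial\Pi(a)$, so Corollary~\ref{corol1} yields $d\bigl(\Phi(s',t'),\gamma\bigr)\ge a-C''a^{2}$. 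Since $a^{2}=\cO\bigl(\tfrac{\log^{2}\beta}{\beta^{2}}\bigr)=\cO\bigl(\tfrac1\beta\bigr)$, the triangle inequality $d(x,\gamma)\ge d\bigl(\Phi(s',t'),\gamma\bigr)-\bigl|x-\Phi(s',t')\bigr|$ gives the asserted estimate. In particular $d(x,\gamma)\ge\frac{5\log\beta}{\beta}$ for large~$\beta$, hence $x\notin W\bigl(\tfrac{5\log\beta-1}{\beta}\bigr)$, so Lemma~\ref{lem5} applies with $k=5$, $c=1$ at every $x\in V(\beta)$.

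It then remains to substitute $d(x,\gamma)\ge a-\frac{C}{\beta}=\frac{6\log\beta-C}{\beta}$ into \eqref{eq-u1} and \eqref{eq-u2}. A short computation using $\frac{\log^{2}\beta}{\beta}\to0$ shows $\frac{(\beta-\log\beta)d(x,\gamma)}{2}\ge 3\log\beta-C_0$ for a suitable constant $C_0$ and all large~$\beta$, hence $\exp\bigl(-\tfrac{(\beta-\log\beta)d(x,\gamma)}{2}\bigr)\le e^{C_0}\beta^{-3}$. Inserting this into \eqref{eq-u1} and \eqref{eq-u2} we obtain $\bigl|u_{j,\beta}(x)\bigr|\le De^{C_0}\beta^{-1}$ and $\bigl|\nabla u_{j,\beta}(x)\bigr|\le De^{C_0}$ for all $x\in V(\beta)$, which are the claimed bounds after renaming the constant.

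The one genuinely delicate point is this bookkeeping in the exponent: one must check that the $\frac1\beta$--thickness of $V(\beta)$, the $\cO(a^{2})$ curvature correction from Corollary~\ref{corol1} and the replacement of $\beta-\log\beta$ by $\beta$ all contribute merely an additive $\cO(1)$ to $\frac{(\beta-\log\beta)d(x,\gamma)}{2}$, and never a term comparable to $\log\beta$, which would destroy the power $\beta^{-3}$ and with it the lemma. This is why it is essential that the deviation of $d(x,\gamma)$ from $a$ on $V(\beta)$ is $\cO\bigl(\tfrac1\beta\bigr)$ rather than merely $o(a)$, and why the half-width is taken to be exactly $a=\frac{6\log\beta}{\beta}$, so that $\tfrac12\beta a=3\log\beta$ cancels the largest prefactor $\beta^{3}$ appearing in \eqref{eq-u2}.
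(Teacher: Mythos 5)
Your proof is correct and follows essentially the same route as the paper: locate a nearby boundary point of $\Pi(a)$ via $\rho_a$ and Lemma~\ref{lemp1}, lower-bound $d(x,\gamma)$ by $a-\cO(1/\beta)$ using Corollary~\ref{corol1} and the triangle inequality, and then feed this into Lemma~\ref{lem5} so that the exponential factor supplies $\beta^{-3}$ against the polynomial prefactors. The only cosmetic difference is your choice of $k=5$, $c=1$ versus the paper's $k=6$, $c=2C_2$ when invoking Lemma~\ref{lem5}, which is immaterial since that lemma holds for all $k,c>0$.
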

\begin{proof}
By Corollary \ref{corol1} there exists a $C_1>0$ such that
\[
d(x,\gamma)\ge a-C_1a^2 \text{ for all } x\in \partial\Pi(a),
\]
holds provided $\beta$ is sufficiently large.
On the other hand, for any $x=\Phi(s,t)\in V(\beta)$
one can find $(s',t')\in\partial P(a)$ with
\[
\rho_a\big(s,t\big)=\sqrt{(s-s')^2+(t-t')^2}\le \dfrac{1}{\beta}.
\]
As $\partial \Pi(a)=\Phi\big(\partial\Pi(a)\big)$, it follows from Eq.~\ref{eq-lipf} that for all $x\in V(\beta)$
\[
d\big(x,\partial\Pi(a)\big)\le
\big|\Phi(s,t)-\Phi(s',t')\big|
\le C_2 \sqrt{(s-s')^2+(t-t')^2}
\le \dfrac{C_2}{\beta} 
\]
holds with some $C_2>0$. Consequently, for sufficiently large $\beta$ we have
\[
V(\beta)\subset\RR^2\setminus W\Big(a-\dfrac{2C_2}{\beta}\Big)
=\RR^2\setminus W\Big(\dfrac{6\log\beta-2C_2}{\beta}\Big),
\]
and Lemma \ref{lem5} is applicable. For $x\in V(\beta)$ and large $\beta$ we can estimate
\[
\sqrt{-E_j(\beta)}d(x,\gamma)\ge \dfrac{\beta-\log\beta}{2}
\Big(\dfrac{6\log\beta}{\beta }-\dfrac{2C_2}{\beta}\Big)=
3\log\beta+\cO(1),
\]
by Lemma \ref{lem1}, hence applying \eqref{eq-u1} and \eqref{eq-u2} we get the sought bounds.
\end{proof}

\section{Using the max-min principle}\label{sec-mm}

Let us fix now an integer $N\ge 1$. Consider the first $N$ eigenvalues $E_j(\beta)$
and the associated \emph{orthonormal} eigenfunctions $u_{j,\beta}$ of $H_\beta$
and denote
\[
\varphi_{j,\beta}:=g_\beta u_{j,\beta},
\]
where $g_\beta$ is the function \eqref{mollif}. As $\supp g_\beta\subset\Pi(a)$, one has $\varphi_{j,\beta}\in H^1_0\big( \Pi(a)\big)$. Following the usual convention, we denote here and in the following by $\delta_{jl}$ the Kronecker delta symbol.

\begin{lemma}
        \label{lemort}
In the limit $\beta\to+\infty$ one has
\begin{equation}
       \label{eq-ort}
\langle \varphi_{j,\beta}, \varphi_{l,\beta}\rangle_{L^2(\Pi(a))}=\delta_{jl}+\cO(\beta^{-2}).
\end{equation}
\end{lemma}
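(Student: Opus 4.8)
The plan is to write $\langle \varphi_{j,\beta}, \varphi_{l,\beta}\rangle = \langle g_\beta^2 u_{j,\beta}, u_{l,\beta}\rangle$ and compare it with $\langle u_{j,\beta}, u_{l,\beta}\rangle = \delta_{jl}$, so that the error term is
\[
\langle \varphi_{j,\beta}, \varphi_{l,\beta}\rangle_{L^2(\Pi(a))} - \delta_{jl}
= -\iint_{\RR^2} \big(1-g_\beta(x)^2\big)\, u_{j,\beta}(x)\,\overline{u_{l,\beta}(x)}\,dx .
\]
Since $0\le g_\beta\le 1$, the factor $1-g_\beta^2$ is bounded by $1$ and, more importantly, is supported in the small neighbourhood $V(\beta)$ of $\partial\Pi(a)$ together with the region $\RR^2\setminus\Pi(a)$ where $u_{j,\beta}$ is merely an $L^2$ function, not identically zero. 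So the integral splits into a contribution from $V(\beta)$, where I can use the pointwise bound $|u_{j,\beta}|\le D/\beta$ from Lemma \ref{lem6}, and a contribution from $\RR^2\setminus\Pi(a)$, where I fall back on Lemma \ref{lem5}.

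For the $V(\beta)$ part: by Cauchy--Schwarz, $\big|\iint_{V(\beta)} (1-g_\beta^2) u_{j,\beta}\overline{u_{l,\beta}}\big| \le \sup_{V(\beta)}|u_{j,\beta}|\cdot\sup_{V(\beta)}|u_{l,\beta}|\cdot |V(\beta)|$, and by Lemma \ref{lem6} the two suprema are each $\cO(\beta^{-1})$, while $|V(\beta)|=\cO(1/\beta)$ (it is a tube of width $\sim 1/\beta$ around the boundary curve of $\Pi(a)$, whose length is $\cO(1)$, using Lemma \ref{lemp1} to control the Jacobian of $\Phi$). This already gives $\cO(\beta^{-3})$, comfortably inside $\cO(\beta^{-2})$. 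Strictly speaking one can be cruder still: even $\sup_{V(\beta)}|u_{j,\beta}|^2\cdot |V(\beta)|=\cO(\beta^{-3})$ suffices. For the part outside $\Pi(a)$, note that by Corollary \ref{corol1} every $x\notin\Pi(a)$ satisfies $d(x,\gamma)\ge a-C a^2 = \frac{6\log\beta}{\beta}+\cO(\beta^{-2}\log^2\beta)$, so such $x$ lies outside $W\big(\frac{5\log\beta}{\beta}\big)$, say, and Lemma \ref{lem5} together with Lemma \ref{lem3} applies: using $|u_{j,\beta}(x)|\le D\beta^2 e^{-(\beta-\log\beta)d(x,\gamma)/2}$ and Cauchy--Schwarz, $\big|\iint_{\RR^2\setminus\Pi(a)} u_{j,\beta}\overline{u_{l,\beta}}\big|\le D^2\beta^4\iint_{\RR^2\setminus W(5\log\beta/\beta)} e^{-(\beta-\log\beta)d(x,\gamma)}\,dx$, which by Lemma \ref{lem3} (applied with $k=5$) is $\cO(\beta^4\cdot\beta^{-6})=\cO(\beta^{-2})$.

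Adding the two contributions yields $\langle \varphi_{j,\beta}, \varphi_{l,\beta}\rangle_{L^2(\Pi(a))}=\delta_{jl}+\cO(\beta^{-2})$, as claimed. The only mild subtlety — and the step I would be most careful about — is bookkeeping the region of integration: one must not double-count $V(\beta)$ and $\RR^2\setminus\Pi(a)$, and one must check that the tail bound of Lemma \ref{lem3} really is available with a $k$ large enough to beat the $\beta^4$ coming from $\|F_{j,\beta}\|$ squared (here $k=5$ gives $\beta^{-2}$, and any $k\ge 5$ works); everything else is a routine Cauchy--Schwarz plus volume estimate. Note also that the inner product is over $\Pi(a)$ while the eigenfunctions live on $\RR^2$; since $\varphi_{j,\beta}$ vanishes off $\Pi(a)$ this is immaterial, but it is why the comparison must be run against $\iint_{\RR^2}|u_{j,\beta}|^2=1$ rather than against $\iint_{\Pi(a)}|u_{j,\beta}|^2$, and the difference between those two is exactly the $\RR^2\setminus\Pi(a)$ term handled above.
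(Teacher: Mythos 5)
Your proof is correct and follows essentially the same route as the paper: both reduce the claim to bounding $\iint(1-g_\beta^2)\,\overline{u_{j,\beta}}u_{l,\beta}$ over the region where $g_\beta\ne 1$, which lies outside $W(5\log\beta/\beta)$, and then combine the pointwise decay of Lemma \ref{lem5} with the integral estimate of Lemma \ref{lem3} (with $k=5$ beating the $\beta^4$ from the two eigenfunction bounds). The only difference is cosmetic bookkeeping: the paper integrates over all of $\RR^2\setminus W(5\log\beta/\beta)$ at once, whereas you split off the boundary layer $V(\beta)$ and treat it separately via Lemma \ref{lem6}, which works equally well.
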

\begin{proof}
Denote for brevity $S_\beta:=W\Big(\dfrac{5\log\beta}{\beta}\Big)$. In a way similar to the proof of Lemma \ref{lem6} one can show that for all sufficiently large $\beta$ we have $S_\beta\subset\Pi(a)$ and $g_\beta\big|_{S_\beta}=1$. Moreover, for $x\notin S_\beta$ one can estimate $u_{j,\beta}(x)$ with the help of Lemma \ref{lem5}. Hence using first the boundedness of the function $g_\beta$ and applying subsequently Lemma~\ref{lem3}, we get
\begin{multline*}
\Big|
\langle u_{j,\beta}, u_{l,\beta}\rangle_{L^2(\RR^2)}
-
\langle \varphi_{j,\beta}, \varphi_{l,\beta}\rangle_{L^2(\Pi(a))}
\Big|
\\
=
\Big|
\langle u_{j,\beta}, u_{l,\beta}\rangle_{L^2(\RR^2)}
-
\langle \varphi_{j,\beta}, \varphi_{l,\beta}\rangle_{L^2(\RR^2)}
\Big|\\
=
\bigg|
\iint_{\RR^2} \big(1-g_\beta(x)^2\big)\overline{u_{j,\beta}(x)}u_{l,\beta}(x)\,dx
\bigg|
=
\bigg|
\iint_{\RR^2\setminus S_\beta} \big(1-g_\beta(x)^2\big)\overline{u_{j,\beta}(x)}u_{l,\beta}(x)\,dx
\bigg|
\\
\le
C_1
\iint_{\RR^2\setminus S_\beta}
|\overline{u_{j,\beta}(x)}u_{l,\beta}(x)|\,dx
\le
C_2 \beta^4
\iint_{\RR^2\setminus S_\beta}
e^{-(\beta-\log\beta)d(x,\gamma)}
\,dx=\cO\big(\beta^{-2}\big)
\end{multline*}
with some constants $C_1,C_2>0$. As $\{u_{j,\beta}\}$ is an orthonormal system by assumption, we arrive at the relation \eqref{eq-ort}.
\end{proof}

\begin{lemma}
           \label{lem8}
In the limit $\beta\to+\infty$ one has
\[
\langle \nabla u_{j,\beta},\nabla u_{l,\beta}\rangle_{L^2(\Pi(a))}
-\beta \int_\gamma \overline{ u_{j,\beta}(s)}u_{l,\beta}(s)\,dS
=E_{j}(\beta)\delta_{jl} +\cO\big(\beta^{-1}\big).
\]
\end{lemma}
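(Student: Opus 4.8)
The starting point is that $u_{j,\beta}$ is an $L^2(\RR^2)$-normalized eigenfunction of $H_\beta$ with eigenvalue $E_j(\beta)$ and that the eigenfunctions $u_{1,\beta},\dots,u_{N,\beta}$ have been chosen orthonormal. By the definition of $H_\beta$ through the form $h_\beta$ this yields the identity
\[
\iint_{\RR^2}\overline{\nabla u_{j,\beta}}\cdot\nabla u_{l,\beta}\,dx-\beta\int_\gamma\overline{u_{j,\beta}}\,u_{l,\beta}\,dS=E_j(\beta)\,\langle u_{j,\beta},u_{l,\beta}\rangle_{L^2(\RR^2)}=E_j(\beta)\,\delta_{jl}.
\]
Since $\gamma\subset\Pi(a)$, the boundary term here is exactly the one appearing in the statement of the lemma; hence, setting $R_{jl}(\beta):=\iint_{\RR^2\setminus\Pi(a)}\overline{\nabla u_{j,\beta}}\cdot\nabla u_{l,\beta}\,dx$, one has
\[
\langle\nabla u_{j,\beta},\nabla u_{l,\beta}\rangle_{L^2(\Pi(a))}-\beta\int_\gamma\overline{u_{j,\beta}}\,u_{l,\beta}\,dS=E_j(\beta)\delta_{jl}-R_{jl}(\beta),
\]
and it remains to prove that $R_{jl}(\beta)=\cO(\beta^{-1})$.

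To that end I would first observe that $\RR^2\setminus\Pi(a)$ stays far from $\gamma$: if $x\notin\Pi(a)$ and $y\in\gamma\subset\Pi(a)$ realizes $|x-y|=d(x,\gamma)$, the segment joining $x$ to $y$ meets $\partial\Pi(a)$ at some point $x'$, so that $d(x,\gamma)\ge|x'-y|\ge d(x',\gamma)\ge a-Ca^2$ by Corollary \ref{corol1}, for $\beta$ large. With the choice \eqref{abeta} one has $a-Ca^2=\frac{6\log\beta}{\beta}-\cO\big(\frac{\log^2\beta}{\beta^2}\big)\ge\frac{6\log\beta-1}{\beta}$ for all sufficiently large $\beta$, whence
\[
\RR^2\setminus\Pi(a)\subset\RR^2\setminus W\Big(\tfrac{6\log\beta-1}{\beta}\Big).
\]
On this set Lemma \ref{lem5}, applied with $k=6$ and $c=1$, yields the pointwise bound $\big|\nabla u_{m,\beta}(x)\big|\le D\beta^3\exp\big(-\tfrac12(\beta-\log\beta)d(x,\gamma)\big)$ for $m\in\{j,l\}$, and therefore
\[
\big|R_{jl}(\beta)\big|\le\iint_{\RR^2\setminus\Pi(a)}\big|\nabla u_{j,\beta}\big|\,\big|\nabla u_{l,\beta}\big|\,dx\le D^2\beta^6\iint_{\RR^2\setminus W(\frac{6\log\beta-1}{\beta})}e^{-(\beta-\log\beta)d(x,\gamma)}\,dx.
\]
By Lemma \ref{lem3}, used once more with $k=6$ and $c=1$, the last integral is $\cO(\beta^{-7})$, so $\big|R_{jl}(\beta)\big|\le D^2\beta^6\cdot\cO(\beta^{-7})=\cO(\beta^{-1})$, as required.

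The only point requiring care is the bookkeeping of the powers of $\beta$: the gradient estimate of Lemma \ref{lem5} costs a factor $\beta^3$ for each of the two functions, so one is forced to invoke Lemma \ref{lem3} with exponent $k\ge 6$. This is legitimate precisely because the quadratic correction $Ca^2$ in Corollary \ref{corol1} is of lower order than $a$ itself, which is what allows $\RR^2\setminus\Pi(a)$ to be absorbed into the complement of $W\big(\frac{6\log\beta-1}{\beta}\big)$. Beyond this accounting the argument is a direct substitution, and I do not expect any genuine difficulty.
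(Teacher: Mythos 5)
Your proposal is correct and follows essentially the same route as the paper: reduce to estimating the gradient inner product over $\RR^2\setminus\Pi(a)$, note that this set lies outside $W\big(\frac{6\log\beta-c}{\beta}\big)$ so that the pointwise bound \eqref{eq-u2} of Lemma \ref{lem5} applies, and then invoke Lemma \ref{lem3} with $k=6$ to absorb the $\beta^6$ factor into $\cO(\beta^{-1})$. The only difference is that you spell out the inclusion $\RR^2\setminus\Pi(a)\subset\RR^2\setminus W\big(\frac{6\log\beta-1}{\beta}\big)$ via Corollary \ref{corol1}, where the paper simply refers back to the argument in Lemma \ref{lem6}.
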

\begin{proof}
Note first that the relations
\[
\langle \nabla u_{j,\beta},\nabla u_{l,\beta}\rangle_{L^2(\RR^2)}-\beta \int_\gamma \overline{ u_{j,\beta}(s)}u_{l,\beta}(s)dS
=E_{j}(\beta)\delta_{jl}
\]
hold by assumption and that a certain neighborhood of $\gamma$ is included into $\Pi(a)$, hence it is sufficient to check the estimate
\[
\langle \nabla u_{j,\beta},\nabla u_{l,\beta}\rangle_{L^2(\RR^2\setminus\Pi(a))}=\cO(\beta^{-1}).
\]
As in the proof of Lemma \ref{lem6} we can check that the inclusion
\[
W\bigg(\dfrac{6\log\beta-C_1}{\beta}\bigg)\subset\Pi(a).
\]
holds for some $C_1>0$ and all sufficiently large $\beta$. Using then the estimate \eqref{eq-u2} and subsequently Lemma~\ref{lem3}, we get
\begin{multline*}
\Big|\langle \nabla u_{j,\beta},\nabla u_{l,\beta}\rangle_{L^2(\RR^2\setminus\Pi(a))}\Big|
\le
\iint_{\RR^2\setminus W\big(\frac{6\log\beta-C_1}{\beta}\big)} |\nabla u_{j,\beta}(x)|\cdot
|\nabla u_{l,\beta}(x)|\,dx\\
\le
C_2 \beta^6
\iint_{\RR^2\setminus W\big(\frac{6\log\beta-C_1}{\beta}\big)}
e^{-(\beta-\log\beta)d(x,\gamma)}\,dx
\le C_3\,\dfrac{\beta^6 }{\beta^7}=\cO\Big(
\dfrac{1}{\beta}\Big). \qedhere
\end{multline*}
\end{proof}

Our principal estimate concerns the question what happens if $u_{j,\beta}$ in the above formula is replaced by
the cut-off functions; our aim is to show that this makes the error term worse but only by a logarithmic factor.
\begin{lemma}
      \label{lempi}
In the limit $\beta\to+\infty$ one has
\[
\langle \nabla \varphi_{j,\beta},\nabla \varphi_{l,\beta}\rangle_{L^2(\Pi(a))}
-\beta \int_\gamma \overline{ \varphi_{j,\beta}(s)}\varphi_{l,\beta}(s)\,dS
=E_{j}(\beta)\delta_{jl} +\cO\Big(\dfrac{\log\beta}{\beta}\Big).
\]
\end{lemma}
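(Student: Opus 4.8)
The goal is to pass from the identity of Lemma~\ref{lem8}, stated for the eigenfunctions $u_{j,\beta}$, to the same identity for the cut-off functions $\varphi_{j,\beta}=g_\beta u_{j,\beta}$, at the cost of an extra logarithmic factor. The strategy is to expand $\nabla\varphi_{j,\beta}=g_\beta\nabla u_{j,\beta}+u_{j,\beta}\nabla g_\beta$ by the Leibniz rule, substitute into the left-hand side, and show that every term which is not already present in Lemma~\ref{lem8} is $\cO(\log\beta/\beta)$. Concretely, I write
\[
\langle\nabla\varphi_{j,\beta},\nabla\varphi_{l,\beta}\rangle_{L^2(\Pi(a))}
=\iint_{\Pi(a)}g_\beta^2\,\overline{\nabla u_{j,\beta}}\cdot\nabla u_{l,\beta}\,dx
+\text{(cross terms)}+\iint_{\Pi(a)}|\nabla g_\beta|^2\,\overline{u_{j,\beta}}\,u_{l,\beta}\,dx,
\]
and similarly $\int_\gamma\overline{\varphi_{j,\beta}}\varphi_{l,\beta}\,dS=\int_\gamma\overline{u_{j,\beta}}u_{l,\beta}\,dS$ since $g_\beta\equiv1$ on a neighbourhood of $\gamma$ (indeed on $S_\beta\supset\gamma$, as in Lemma~\ref{lemort}). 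Thus the boundary term is unchanged, and what remains is to control the discrepancy between $\iint g_\beta^2\,\overline{\nabla u_{j,\beta}}\cdot\nabla u_{l,\beta}$ and $\iint_{\Pi(a)}\overline{\nabla u_{j,\beta}}\cdot\nabla u_{l,\beta}$, together with the cross terms and the $|\nabla g_\beta|^2$ term.

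The first step is the $(1-g_\beta^2)$ correction to the gradient pairing: since $1-g_\beta^2$ is supported in $V(\beta)\subset\RR^2\setminus W\big(\tfrac{6\log\beta-2C_2}{\beta}\big)$, I apply the pointwise bound $|\nabla u_{j,\beta}|\le D$ on $V(\beta)$ from Lemma~\ref{lem6}, together with the fact that $V(\beta)$ has measure $\cO(1/\beta)$ (it is the $\Phi$-image of a $1/\beta$-collar of $\partial P(a)$, whose area is $\cO(1/\beta)$ by the diffeomorphism bound of Lemma~\ref{lemp1}); this gives a contribution $\cO(1/\beta)$. Combined with Lemma~\ref{lem8}, which already controls $\langle\nabla u_{j,\beta},\nabla u_{l,\beta}\rangle_{L^2(\Pi(a))}$ up to $\cO(1/\beta)$, the ``main'' term $\iint g_\beta^2\,\overline{\nabla u_{j,\beta}}\cdot\nabla u_{l,\beta}-\beta\int_\gamma\overline{u_{j,\beta}}u_{l,\beta}$ equals $E_j(\beta)\delta_{jl}+\cO(1/\beta)$.

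The second and decisive step is the estimate of the two error terms involving $\nabla g_\beta$, both supported on $\Theta(\beta)$. On $\Theta(\beta)$ we have $\rho_a(s,t)\ge\tfrac1{\beta\log\beta}$, so $\Theta(\beta)\subset\RR^2\setminus W\big(\tfrac{6\log\beta-C}{\beta}\big)$ for a suitable $C$ (again using Lemma~\ref{lemp1} to compare $\rho_a$-distance with Euclidean distance to $\partial\Pi(a)$, then Corollary~\ref{corol1}), and consequently Lemma~\ref{lem6} gives $|u_{j,\beta}|\le D/\beta$ on $\Theta(\beta)$. For the $|\nabla g_\beta|^2$ term I then bound
\[
\bigg|\iint_{\Theta(\beta)}|\nabla g_\beta|^2\,\overline{u_{j,\beta}}\,u_{l,\beta}\,dx\bigg|
\le\frac{D^2}{\beta^2}\iint_{\Theta(\beta)}|\nabla g_\beta|^2\,dx
=\frac{D^2}{\beta^2}\cdot\cO(\beta\log\beta)=\cO\!\Big(\frac{\log\beta}{\beta}\Big),
\]
using the second estimate of Lemma~\ref{lem4}. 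For a cross term such as $\iint g_\beta\,\overline{(\nabla u_{j,\beta})}\cdot(\nabla g_\beta)\,u_{l,\beta}$ I use $|g_\beta|\le1$, $|\nabla u_{j,\beta}|\le D$ and $|u_{l,\beta}|\le D/\beta$ on $\Theta(\beta)$, then the first estimate $\iint_{\Theta(\beta)}|\nabla g_\beta|\,dx=\cO(1)$ of Lemma~\ref{lem4}, obtaining $\cO(1/\beta)$. Summing all contributions yields the claimed $E_j(\beta)\delta_{jl}+\cO(\log\beta/\beta)$.

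**Main obstacle.** The routine parts are the Leibniz expansion and the observation that the $\delta$-term is untouched by $g_\beta$. The delicate point — and the reason the logarithm appears — is the $|\nabla g_\beta|^2$ term: a naive bound $|\nabla g_\beta|\lesssim\beta\log\beta$ on a set of area $\cO(1/\beta)$ would give $\cO(\beta\log^2\beta)\cdot\beta^{-2}$ from the $u$-factors, which is not small enough; one genuinely needs the sharper integral bound $\iint|\nabla g_\beta|^2=\cO(\beta\log\beta)$ from Lemma~\ref{lem4}, which in turn relies on the logarithmic profile built into the definition \eqref{mollif} of $g_\beta$ (the $\log\log\beta$ denominator) and on the fact that $u_{j,\beta}$ is already as small as $1/\beta$ there rather than merely $\cO(1)$. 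Getting these three ingredients — the geometric inclusion $\Theta(\beta),V(\beta)\subset\RR^2\setminus W(\cdots)$, the pointwise decay of $u_{j,\beta}$ from Lemma~\ref{lem6}, and the $L^1$/$L^2$ bounds on $\nabla g_\beta$ from Lemma~\ref{lem4} — to fit together with exactly matching powers of $\beta$ and $\log\beta$ is where the care is needed.
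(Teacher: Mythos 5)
Your proposal is correct and follows essentially the same route as the paper: the same Leibniz decomposition into the $(g_\beta^2-1)$ term, the $|\nabla g_\beta|^2$ term and the two cross terms, each estimated with the same combination of Lemma~\ref{lem8}, the pointwise bounds of Lemma~\ref{lem6} on $V(\beta)\supset\Theta(\beta)$, the measure bound $|V(\beta)|=\cO(\beta^{-1})$, and the $L^1$/$L^2$ bounds on $\nabla g_\beta$ from Lemma~\ref{lem4}. The accounting of powers of $\beta$ and $\log\beta$ matches the paper's in every term.
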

\begin{proof}
Using $\varphi_{j,\beta}\big|_\gamma=u_{j,\beta}\big|_\gamma$
let us write the expression in question as
\begin{multline*}
\langle \nabla \varphi_{j,\beta},\nabla \varphi_{l,\beta}\rangle_{L^2\big(\Pi(a)\big)}-\beta \int_\gamma \overline{ \varphi_{j,\beta}(s)}\varphi_{l,\beta}(s)\,dS\\
=\langle \nabla u_{j,\beta},\nabla u_{l,\beta}\rangle_{L^2\big(\Pi(a)\big)}
-\beta \int_\gamma \overline{ u_{j,\beta}(s)}u_{l,\beta}(s)\,dS\\
+\iint_{\Pi(a)} \big(g_\beta(x)^2-1\big)\, \overline{\nabla u_{j,\beta}(x)}\cdot\nabla u_{l,\beta}(x)\,dx
+\iint_{\Pi(a)} |\nabla g_\beta(x)|^2 \,\overline{u_{j,\beta}(x)}u_{l,\beta}(x)\,dx\\
+\iint_{\Pi(a)} g_\beta(x) \overline{u_{j,\beta}(x)} \,\nabla g_\beta(x) \cdot \nabla u_{l,\beta}(x)\,dx
+\iint_{\Pi(a)} g_\beta(x) u_{l,\beta}(x) \,\overline{\nabla u_{j,\beta}(x)}\cdot\nabla g_\beta(x)\,dx.
\end{multline*}
The sum of the first two terms on the right-hand side has been already estimated in Lemma~\ref{lem8}, hence we just need to show that the sum of the last four terms on the right-hand side is of order $\cO(\beta^{-1}\log\beta)$. By definition of the function $g_\beta$ and Lemma \ref{lem6} there are constants $C_1,C_2>0$ such that
\begin{multline*}
I_1:=\bigg|\iint_{\Pi(a)} \big(g_\beta(x)^2-1\big) \overline{\nabla u_{j,\beta}(x)}\nabla u_{l,\beta}(x)dx\bigg|
=\bigg|\iint_{V(\beta)}
\big(g_\beta(x)^2-1\big) \overline{\nabla u_{j,\beta}(x)}\nabla u_{l,\beta}(x)dx\bigg|\\
\le C_1
\iint_{V(\beta)}
\Big|\overline{\nabla u_{j,\beta}(x)}\nabla u_{l,\beta}(x)\Big|dx
\le C_2 \big| V(\beta)\big|.
\end{multline*}
Furthermore, by definition of $V(\beta)$ we have
$V(\beta)=\Phi(U)$ with $U=\Big\{
(s,t)\in P(a):\, \rho_a(s,t)\le \beta^{-1}\Big\}$,
and since the measure $\big|U\big|$ is of order $\cO(\beta^{-1})$, we get also $\big|V(\beta)\big|=\cO(\beta^{-1})$, which in turn gives
$I_1=\cO(\beta^{-1})$.

Using next the inclusion $\supp\nabla g_\beta\subset\Theta(\beta)\subset V(\beta)$, Lemma \ref{lem6}
and after that Lemma \ref{lem4}, we have
\begin{multline*}
I_2:=\bigg|
\iint_{\Pi(a)} |\nabla g_\beta(x)|^2\, \overline{u_{j,\beta}(x)}u_{l,\beta}(x)\,dx
\bigg|\\
=
\bigg|
\iint_{\Theta(\beta)} |\nabla g_\beta(x)|^2\, \overline{u_{j,\beta}(x)}u_{l,\beta}(x)\,dx
\bigg|
\le
\dfrac{C_3}{\beta^2 }\iint_{\Theta(\beta)} \Big|\nabla g_\beta(x)\Big|^2\,dx
=\cO\Big(\dfrac{\log\beta}{\beta}\Big).
\end{multline*}
Using the same reasoning we infer that
\begin{multline*}
I_{j,l}:=\bigg|\iint_{\Pi(a)} g_\beta(x) \overline{u_{j,\beta}(x)}\, \nabla g_\beta(x) \nabla u_{l,\beta}(x)\,dx\bigg|\\
=
\bigg|\iint_{\Theta(\beta)} g_\beta(x) \overline{u_{j,\beta}(x)} \,\nabla g_\beta(x) \nabla u_{l,\beta}(x)\,dx\bigg|
\le
\dfrac{C_4}{\beta}
\iint_{\Theta(\beta)} \Big|\nabla g_\beta(x)\Big|\,dx =\cO\Big( \dfrac{1}{\beta}\Big).
\end{multline*}
Putting the estimates together we find
$I_1+I_2+I_{j,l}+I_{l,j}=\cO(\beta^{-1}\log\beta)$, which concludes the proof.
\end{proof}

Now we are in position to complete the proof of our main result.

\begin{proof}[Proof of Proposition \ref{prop2}]
Fix an integer $N\ge 1$. By the max-min principle one has
\[
\Lambda_N(\beta)=\max_{G\in S_N} \min_{0\ne f\in G} \dfrac{\displaystyle\iint_{\Pi(a)} |\nabla f|^2\,dx -\beta\displaystyle\int_\gamma|f|^2\,dS}{\|f\|^2_{L^2\big(\Pi(a)\big)}},
\]
where $S_N$ stands for the family of the subspaces of $H^1_0\big(\Pi(a)\big)$ the codimension of which in $L^2\big(\Pi(a)\big)$ equals $N-1$. In view of Lemma \ref{lemort},  the functions $\varphi_{j,\beta}$, $j=1,\dots,N$, are linearly independent in $L^2\big(\Pi(a)\big)$ for all sufficiently large $\beta$, hence each subspace $G\in S_N$ contains at least one linear combination $\varphi$ of the form
\[
\varphi=\sum_{j=1}^N b_j \varphi_{j,\beta}, \quad
b=(b_1,\dots,b_N)\in \CC^N, \quad \|b\|_{\CC^N}=1.
\]
Using once more Lemma \ref{lemort}, we find that
$\|\varphi\|^2_{L^2\big(\Pi(a)\big)}\ge1-C_1\beta^{-2}$
holds for large $\beta$ with a constant $C_1>0$. On the other hand, Lemma \ref{lempi} yields
\begin{multline}
\iint_{\Pi(a)} |\nabla \varphi|^2\,dx -\beta\int_\gamma|\varphi|^2\,dS
=\sum_{j,l=1}^N
\overline{b_j}b_l \bigg(
\langle \nabla \varphi_{j,\beta},\nabla \varphi_{l,\beta}\rangle_{L^2(\Pi(a))}
-\beta \int_\gamma \overline{ \varphi_{j,\beta}(s)}\varphi_{l,\beta}(s)dS
\bigg)\\
=\sum_{j,l=1}^N \overline{b_j}b_l \bigg(E_{j}(\beta)\delta_{jl} +\cO\Big(\dfrac{\log\beta}{\beta}\Big)\bigg)=
\sum_{j=1}^N E_j(\beta) |b_j|^2 +\cO\Big(\dfrac{\log\beta}{\beta}\Big)\\
\le E_N (\beta)+\cO\Big(\dfrac{\log\beta}{\beta}\Big).
\end{multline}
Using the above estimates, we conclude that there are $C_2,C_3>0$ such that
\begin{multline*}
\min_{0\ne f\in G} \dfrac{\displaystyle\iint_{\Pi(a)} |\nabla f|^2\,dx -\beta\displaystyle\int_\gamma|f|^2\,dS}{\|f\|^2_{L^2\big(\Pi(a)\big)}}
\,\le\,
\dfrac{\displaystyle\iint_{\Pi(a)} |\nabla \varphi|^2\,dx -\beta\displaystyle\int_\gamma|\varphi|^2\,dS}{\|\varphi\|^2_{L^2\big(\Pi(a)\big)}}\\
\le \dfrac{E_N (\beta)+C_2\dfrac{\log\beta}{\beta}}{1-C_1\beta^{-2}}\le E_N(\beta)+C_3 \dfrac{\log\beta}{\beta}.
\end{multline*}
What is important is that the constant $C_3$ can be chosen independent of the vector $b$ and hence independent of $G\in S_N$,
then we have automatically
\[
\Lambda_N(\beta)\le E_N(\beta)+ C_3\dfrac{\log\beta}{\beta}.
\]
Combining this with \eqref{eq-ela} we obtain  $\Lambda_N(\beta)-E_N(\beta)=\cO\big(\beta^{-1}\log\beta\big)$.
\end{proof}

\section{Acknowledgments}
The second named author thanks the Doppler Institute in Prague for the warm hospitality during the stay in May-June 2012.
The research was partially supported by ANR NOSEVOL and GDR DYNQUA, and by Czech Science Foundation within the project P203/11/0701.
The authors are thankful to the anonymous referee whose suggestions helped to shorten the presentation and to make it more transparent.

\end{document}